\documentclass[onecolumn, draftcls]{IEEEtran}
\usepackage{amsmath,amssymb}
\usepackage{graphicx}
\usepackage{mathrsfs}
\usepackage{lineno}
\usepackage{hyperref}
\usepackage{xcolor}
\usepackage{listings}
\usepackage[linesnumbered,ruled,vlined]{algorithm2e}
\usepackage{setspace}
\usepackage{threeparttable}
\usepackage{booktabs}
\usepackage[table]{xcolor}
\usepackage{multirow}
\usepackage{longtable}
\usepackage{diagbox} 
\usepackage{tikz}
\usetikzlibrary{trees,arrows}
\usepackage{makecell}
\usepackage{float}
\usepackage{textgreek}
\usepackage{enumitem}
\usepackage{bm}

\newcommand{\biggg}{\bBigg@{3}}

\makeatother
\SetCommentSty{myCommentStyle}	

\SetKwProg{Fn}{Function}{:}{end}
\SetKwFunction{Choose}{Choose}

\newtheorem{theorem}{Theorem}[section]
\newtheorem{lemma}{Lemma}[section]

\newtheorem{proposition}{Proposition}[section]

\newtheorem{define}{Definition}[section]
\newtheorem{proc}{Procedure}[section]
\newtheorem{example}{Example}[section]
\newtheorem{innercustomgeneric}{Problem}
\newenvironment{problem}[1]
  {\renewcommand\theinnercustomgeneric{#1}\innercustomgeneric}
  {\endinnercustomgeneric}

\newtheorem{remark}{Remark}[section]

\def\zero{{\mathrm{Zero}}}
\def\proj{{\mathrm{Proj}}}
\def\trunc{{\mathrm{Trunc}}}
\def\P{{\mathcal{P}}}

\def\T{{\mathcal{T}}}
\def\Zero{{\mathrm{Zero}}}
\def\x{{\mathbf{x}}}
\def\can{{\mathrm{can}}}

\def\subs{{\mathrm{subs}}}

\lstdefinelanguage{Maple}{
   keywords={if, while, do, else, end, for, from, to,then},
   keywordstyle=\color{blue}\bfseries,
   ndkeywords={class, export, boolean, throw, implements, import, this},
   ndkeywordstyle=\color{darkgray}\bfseries,
   identifierstyle=\color{black},
   sensitive=false,
   comment=[l]{//},
   morecomment=[s]{/*}{*/},
   commentstyle=\color{purple}\ttfamily,
   stringstyle=\color{red}\ttfamily,
   morestring=[b]',
   morestring=[b]"
}

\lstdefinelanguage{SOStools}{
   keywords={syms,sosprogram,monomials,sosineq,sossetobj,sossolve,sosgetsol,sospolyvar},
   keywordstyle=\color{blue}\bfseries,
   ndkeywords={syms,sosprogram,monomials,sosineq,sossetobj,sossolve,sosgetsol},
   ndkeywordstyle=\color{blue}\bfseries,
   identifierstyle=\color{black},
   sensitive=false,
   comment=[l]{//},
   morecomment=[s]{/*}{*/},
   commentstyle=\color{purple}\ttfamily,
   stringstyle=\color{red}\ttfamily,
   morestring=[b]',
   morestring=[b]"
}



\usepackage{cite}

\begin{document}
\title{An Algebraic Method for Full-Rank Characterization in Binary Linear Coding}

\author{Mingyang Zhu, Laigang Guo, Zhenyu Huang, Xingbing Chen, Jue Wang, Tao Guo, and Xiao-Shan Gao
\thanks{M. Zhu is with the Institute of Network Coding, The Chinese University of Hong Kong, Hong Kong, SAR, China (e-mail: mingyangzhu@cuhk.edu.hk).}
\thanks{L. Guo, X. Chen, and J. Wang are with the School of Mathematical Sciences, Beijing Normal University, Beijing, China (e-mail: lgguo@bnu.edu.cn; xb.chen@mail.bnu.edu.cn; w.jue@mail.bnu.edu.cn). \emph{(Corresponding author: Laigang Guo.)}}
\thanks{Z. Huang is with the State Key Laboratory of Cyberspace Security Defense, Institute of Information Engineering, Chinese Academy of Sciences, Beijing, China, and also with the School of Cyber Security, University of Chinese Academy of Sciences, Beijing, China (e-mail: huangzhenyu@iie.ac.cn).}
\thanks{T. Guo is with the School of Cyber Science and Engineering, Southeast University, Nanjing, China (e-mail: taoguo@seu.edu.cn).}
\thanks{X. Gao is with the KLMM, AMSS, Chinese Academy of Sciences, Beijing, China (e-mail: xgao@mmrc.iss.ac.cn).}
}

\maketitle











\begin{abstract}
In this paper, we develop a characteristic set (CS)-based method for deriving full-rank equivalence conditions of symbolic matrices over the binary field. Such full-rank conditions are of fundamental importance for many linear coding problems in communication and information theory. Building on the developed CS-based method, we present an algorithm called Binary Characteristic Set for Full Rank~(BCSFR), which efficiently derives the full-rank equivalence conditions as the zeros of a series of characteristic sets. In other words, the BCSFR algorithm can characterize all feasible linear coding schemes for certain linear coding problems (e.g., linear network coding and distributed storage coding), where full-rank constraints are imposed on several symbolic matrices to guarantee decodability or other properties of the codes. The derived equivalence conditions can be used to simplify the optimization of coding schemes, since the intractable full-rank constraints in the optimization problem are explicitly characterized by simple triangular-form equality constraints.
\end{abstract}

\begin{IEEEkeywords}
Binary linear coding, network coding, full-rank characterization, binary characteristic set, symbolic computation.
\end{IEEEkeywords}

\section{Introduction}
Linear coding has served as a cornerstone of modern information infrastructure, ensuring data reliability and efficiency across diverse domains such as communication systems, distributed storage systems, and resilient cloud computing clusters~\cite{lin2001error,yeung2008information,modern_coding_theory,dimakis2010network,locality2012,lrc2014,speeding2018,yu2019lagrange}. In general, a linear code can be defined by a collection of linear maps. For our analysis, we represent these maps concretely as a collection of matrices, ${\mathscr C} = ({\bf G}_1,\ldots,{\bf G}_N)$, over a finite field $\mathbb{F}$. The elements in ${\bf G}_i$, $1\le i\le N$, are referred to as \emph{encoding coefficients}. For example, a channel code for point-to-point communication can be defined by ${\mathscr C} = ({\bf G}_1)$, where ${\bf G}_1$ is the generator matrix of the channel code~\cite{lin2001error}; a linear network code~(LNC) for network communication can be defined by ${\mathscr C} = ({\bf G}_1,\ldots,{\bf G}_N)$, where ${\bf G}_i, 1\le i\le N$, is the local encoding matrix for the node~$i$ in the network~\cite{yeung2008information}.

In many coding problems, it is of fundamental interest to find all feasible ${\bf G}_i$, $1\le i \le N$, under which a given objective is optimal or near-optimal. Clearly, this important problem can be formalized as a mathematical optimization problem, which is rendered exceptionally difficult by its combinatorial nature. Since a practical linear code should satisfy some \emph{basic constraints}, such as constraints on decodability, matrix structures, information-theoretic security, etc., the optimization problem is likely to be more tractable if we can characterize all \emph{feasible} ${\mathscr C}$ which satisfy these basic constraints.
The set of all feasible ${\mathscr C}$ is referred to as the \emph{feasible set of the linear coding problem}. 
To efficiently solve the optimization problems mentioned above, explicit constraints (such as the zeros of a polynomial system) are necessary.
However, many basic constraints in a linear coding problem are intractable, i.e., they are difficult to be expressed in a \emph{good} form that can be easily utilized by the optimizer. 
For example, the decodability constraint may require certain symbolic matrices to have full rank, while the information-theoretic security constraint may require certain mutual information quantities to be zero. Although the full-rank constraint can be described by requiring the maximal minors to be nonzero, and the security constraint can be described in terms of relations among linear spaces, these descriptions cannot be effectively utilized by the optimizer.
It is non-trivial to explicitly characterize the feasible set of a linear coding problem in a \emph{good} form, e.g., characterizing the feasible set by zeros of simple triangular-form equations, which is precisely the main work of this paper.

\emph{Full-rank} constraints on some specific matrices determined by the encoding coefficients are a class of basic constraints for linear coding. For example, full-rank constraints are involved in the optimal decoding of LNCs~\cite{li2003linear,koetter2003alg} and erasure codes~\cite{modern_coding_theory}, the minimum Hamming distance of linear codes~\cite{lin2001error}, and many other linear coding problems. In coding theory, encoding coefficients are often uniquely determined by \emph{constructive methods}. For example: 
\begin{enumerate}
    \item LNCs: Several algorithms, such as~\cite{li2003linear,jaggi2005polynomial,langberg2006complexity}, are capable of producing a feasible code $\mathscr{C}$ that satisfies the full-rank constraints, which guarantees successful data transmission in a communication network.
    \item Algebraic codes: Many algebraic approaches can construct linear codes with explicit minimum Hamming distance. The constraint of minimum Hamming distance is equivalent to certain full-rank constraints on the generator matrix or parity-check matrix~\cite{lin2001error}.
\end{enumerate}
However, these constructive methods are developed to generate one set of feasible encoding coefficients, rather than giving all sets of feasible encoding coefficients. In other words, they are misaligned with the goal of characterizing the feasible set of the linear coding problem. Moreover, the application of these constructive methods is limited by additional constraints (possibly due to the limitations of a practical system). For instance, in the context of LNC construction, Jaggi-Sanders algorithm~\cite{jaggi2005polynomial} is inapplicable when the algebraic operations available to network nodes are restricted.


Accordingly, a good characterization of the feasible set is essential for optimization. In particular, the fundamental full-rank constraints need to be transformed into more tractable algebraic representations. In this paper, we transform the problem of characterizing the full-rank constraints arising from the linear coding problem to the problem of solving a system of polynomial equations. Owing to the structural properties of the transformed problem, the characteristic set (CS) method proves to be an effective tool. Specifically, we develop a CS–based method to explicitly characterize the feasible set of a linear coding problem by the zeros of a series of characteristic sets.


The CS method is a fundamental tool for the study of systems of polynomial equations \cite{Wu1986,Wang1993,LinLiu1993,LAZARD1991147,KALKBRENER1993143,ChouGao1990,Aubryetal1999}. Its central idea is to transform a general polynomial system into a collection of triangular sets, such that the solution set of the original system can be expressed as a union of the zero sets of these triangular systems. This transformation reduces the problem of solving multivariate polynomial systems to that of solving cascaded univariate equations. Beyond equation solving, the CS method has been widely applied to computing algebraic invariants such as dimension, degree, and order, to addressing radical ideal membership problems, and to proving results in elementary and differential geometry.

Most existing studies on CS methods assume that the solutions are taken over an infinite algebraically closed field. Extensions of the CS method to polynomial systems over finite fields have been investigated in \cite{CHAI2008,GAO2012655,HUANG202166}. Furthermore, a parallel version of the CS algorithm proposed in \cite{GAO2012655} was developed and efficiently implemented in a high-performance computing environment in \cite{ZhaoSongetal2018}. Mou et al. \cite{MOU2021108} proved that the chordal graph structure of a polynomial set is preserved throughout the top-down triangular decomposition process, thereby providing a theoretical foundation for exploiting sparsity and for designing more efficient triangular and regular decomposition algorithms. In particular, Huang et al. \cite{HUANG202166} proposed the characteristic set method for solving Boolean polynomial systems over a binary field and analyzed the computational complexity of the CS method. In this paper, the special triangular form of characteristic sets over the binary field plays a crucial role in the simplified characterization of full-rank constraints. 

The specific contributions of this paper are summarized as follows:
\begin{enumerate}
    \item We develop a CS-based method for deriving a full-rank characterization of symbolic matrices over the binary field. This method builds the theoretical connection between a full-rank problem and a zero-decomposition problem.
    \item Based on the developed method, we present an algorithm called Binary Characteristic Set for Full Rank~(BCSFR), which efficiently derives the full-rank equivalence conditions of symbolic matrices as the zeros of a series of characteristic sets. 
    The BCSFR algorithm provides a convenient characterization of feasible points (i.e., feasible choices of $\mathscr{C}$) in linear coding problems, thereby greatly simplifying the associated optimization problems.
    \item We provide several experiments to verify the effectiveness of the BCSFR algorithm. Specifically, we show that the BCSFR algorithm can be used to find optimal linear network coding and distributed storage coding schemes in practically significant scenarios where no effective solutions exist.
\end{enumerate}

\section{The Optimization Problem in Linear Coding
}
Our discussion in this paper is mainly limited to binary fields. Indeed, the binary field is the most common operating regime for linear coding employed in digital systems.

We consider a linear coding optimization problem involving $n$ binary variables, denoted as Problem~\ref{prob:P1}, in which full-rank constraints play a central role. General solvers for this optimization problem are unavailable unless it is cast into a highly specialized form. Full-rank constraints are intractable for optimization, because the feasible set $\mathcal S$ is described implicitly (e.g., by the rank function) or by very complex equations (e.g., by maximal non-zero minors). It is not straightforward to generate a point belonging to the feasible set $\mathcal S$. Consequently, in the absence of structural insights, solving Problem~\ref{prob:P1} necessitates an exhaustive search over $\mathbb{F}_2^n$, incurring an exponential complexity of $\mathcal{O}(2^n)$ merely to verify the feasibility of all points.

To address this computational challenge, we introduce a reformulation, Problem~\ref{prob:P2}, which replaces the intractable full-rank constraints in Problem~\ref{prob:P1} with an equivalent system of simple triangular-form equations. Specifically, we develop a CS-based method to characterize $\mathcal S$ as the union of the zero sets of \emph{characteristic sets}.\footnote{
The definition of the characteristic set will be given in Definition \ref{def:CS} of this paper. Note that the meaning of the characteristic set here is basically the same as that referred to in Wu's method \cite{Wu1986}.
} This structural shift changes the strategy from ``verification'' (exhaustive search) to ``generation'' (CS-based method). Within our framework, the feasible points are not discovered by trial-and-error but are explicitly parametrized by the \emph{free variables} in the characteristic sets. 

We now formally present the optimization problems P1, P1a and P2. Let $\mathbf{x}=(x_1,\ldots,x_n) \in \mathbb{F}_2^n$ be the vector of optimization variables for Problem~\ref{prob:P1}. Let ${\bf A}_i({\bf x})$,~$i = 1,\ldots,m_1$, be an $\alpha_i \times \beta_i$ matrix in $\bf x$, and let $h_j({\bf x})$, $j = 1,\ldots,m_2$, be polynomials in $\bf x$. In a linear coding problem, the variables $x_1,\ldots,x_n$ are assumed to be the underlying variables that construct the encoding coefficients. Primarily, all ${\bf A}_i({\bf x})$'s are required to have full rank to guarantee the desired properties of the code, such as decodability and minimum Hamming distance. In addition, there may exist some extra linear or nonlinear constraints due to specific limitations of a practical system, represented by the polynomial equations $h_j({\bf x}) = 0, 1\le j\le m_2$. Among all feasible codes, we are interested in an optimal one that minimizes a certain objective function $f({\bf x})$. Now we formulate this optimization problem as
\begin{problem}{P1}\label{prob:P1}
    \begin{align*}
&\min_{\mathbf{x}}~~ f(\mathbf{x})\\
\mbox{\rm subject to}~~ & \text{i) (Full-rank constraints)}~{\bf A}_i({\bf x}) \text{ is of full row rank},~i = 1,\ldots,m_1\\
&\text{ii) (Non-rank constraints)}~h_j(\mathbf{x})=0,~j=1,\ldots,m_2.
\end{align*}
\end{problem}

The physical meaning of Problem~\ref{prob:P1} will be further demonstrated in Example~\ref{ExampleII.1}. 
As aforementioned, for the specific Problem~\ref{prob:P1}, two direct challenges are:

\begin{enumerate}
    \item[1)] How to determine whether the symbolic matrices $\mathbf{A}_1,\ldots,\mathbf{A}_{m_1}$ over the binary field are of full rank?

\item[2)] How to characterize the feasible set explicitly, and further narrow down the range of $\mathbf{x}$ values based on constraints i) and ii)?

\end{enumerate}

\begin{example}\label{ExampleII.1}
    Consider the well-known linear network coding model~\cite{yeung2008information}. Let $\mathcal{G} = (\mathcal{V},\mathcal{E})$ be a directed acyclic graph~(DAG), where $\mathcal{V}$ is the set of nodes, and $\mathcal{E}$ is the set of edges. Let $s \in \mathcal{V}$ be a node, and $\mathcal{U}\subset \mathcal{V}\setminus\{s\}$ be a subset of nodes. In the field of network communication, it is of fundamental interest to design coding schemes to transmit a message generated by node $s$, denoted by a row vector $\bf m$ over some finite field, to the nodes in $\mathcal{U}$. Linear network codes~(LNCs) have been proven in~\cite{li2003linear,koetter2003alg} to be optimal for such communication tasks. The basic idea behind LNC is quite simple: All nodes forward linear combinations of the symbols they possess. This way, the message ${\bf y}$ received by node $u \in \mathcal{U}$ is a linear transformation of $\bf m$, which can be written as ${\bf y}_u = {\bf m} {\bf F}_u$. To recover the message ${\bf m}$ at node $u$, ${\bf F}_u$ is required to have full row rank, which corresponds to a full-rank constraint in i) of Problem~\ref{prob:P1}. In addition, some other constraints on the algebraic operations that a node can take, corresponding to the non-rank constraints in ii) of Problem~\ref{prob:P1}, may exist in practical implementations. 

    To implement coding operations, each node requires a design of several linear combination coefficients, referred to as encoding coefficients. Let $n$ be the total number of encoding coefficients required for the entire network. For networks of non-trivial size, an exhaustive search of the vector space $\mathbb{F}_2^n$ to select optimal encoding coefficients is infeasible. Although there are established algorithms in the field of network coding (e.g.,~\cite{li2003linear,jaggi2005polynomial,langberg2006complexity}) that can provide one feasible selection of the encoding coefficients, these algorithms cannot solve Problem~\ref{prob:P1}. In particular, these algorithms provide only one solution. In addition, these algorithms become inapplicable if there exist constraints on the size of the finite field or on the algebraic operations of nodes, such as constraints ii) in Problem~\ref{prob:P1}.
\end{example}

To smoothly introduce Problem~\ref{prob:P2}, we first provide an equivalent formulation of Problem~\ref{prob:P1}, denoted as Problem~\ref{prob:P1a}. This reformulation does not simplify Problem~\ref{prob:P1}, but connects the original problem and our proposed symbolic algorithm, which will be explained in detail in Section~\ref{Sec-Meth}. Let ${\bm \zeta}_i=(\zeta_{i,1},\ldots,\zeta_{i,\alpha_i}) \in \mathbb{F}_2^{\alpha_i}$ be a vector of binary variables, $i = 1,\ldots,m_1$. We can rewrite Problem~\ref{prob:P1} in an equivalent form:
\begin{problem}{P1a}\label{prob:P1a}
    \begin{align*}
&\min_{\mathbf{x}}~~ f(\mathbf{x})\\
\mbox{\rm subject to}~~ & \text{i) (Full-rank constraints)}~
\{{\bm \zeta}_i:{\bm \zeta}_i \mathbf{A}_i({\bf x}) = {\bf 0}\}=\{{\bf 0}\}\ \mbox{for}\ i = 1,\ldots,m_1
\\
&\text{ii) (Non-rank constraints)}~h_j(\mathbf{x})=0,~j=1,\ldots,m_2.
\end{align*}
\end{problem}

In this paper, we propose a CS-based algorithm, referred to as \emph{Binary Characteristic Set for Full Rank~(BCSFR)}, to efficiently characterize the feasible set of Problem~\ref{prob:P1} by a collection of characteristic sets. To formulate Problem \ref{prob:P2}, we first present a high-level description of this algorithm (the details will be discussed in Section \ref{Sec-Meth}). Let
\begin{equation}\nonumber
    {\mathcal{X}}_{1,i} = \{ \mathbf{x} \in \mathbb{F}_2^n: \{{\bm \zeta}_i:{\bm \zeta}_i \mathbf{A}_i({\bf x}) = {\bf 0}\}=\{{\bf 0}\}\},\quad i = 1,\ldots,m_1,
\end{equation}
and
\begin{equation}\nonumber
    {\mathcal{X}}_{2,j} = \{\mathbf{x} \in \mathbb{F}_2^n: h_j(\mathbf{x}) = 0\},\quad j = 1,\ldots,m_2.
\end{equation}
Then the feasible set of Problem~\ref{prob:P1a} is $\mathcal{S} = \left(\bigcap_{i=1}^{m_1} {\mathcal{X}}_{1,i}\right) \cap \left(\bigcap_{j=1}^{m_2} {\mathcal{X}}_{2,j}\right)$. Let $\T_i$ be a characteristic set and $\zero(\T_i)$ be its zero set (both will be strictly defined in Section~\ref{Sec-Meth}). 
With the input ${\bm \zeta}_i \mathbf{A}_i({\bf x})$'s and $h_j(\mathbf{x})$'s,
our BCSFR algorithm can generate a collection of characteristic sets, $\T^* = \{\T_1,\ldots,\T_s\}$, such that 
\begin{equation}\nonumber
    \bigcup_{i=1}^s \zero(\T_i) = \mathcal{S}
\end{equation}
and
\begin{equation}\nonumber
    \zero(\T_i) \cap \zero(\T_j) = \emptyset,\quad \forall\,i\ne j.
\end{equation}

Based on the above discussions, we now present Problem~\ref{prob:P2}:
\begin{problem}{P2}\label{prob:P2}
\begin{align*}
&\min_{\mathbf{x}}~~ f(\mathbf{x})\\
\mbox{subject to}~~ & ~\bigvee_{i=1}^{s} \mathcal{T}_{i}=0.
\end{align*}
\end{problem}

\section{Binary Characteristic Set Method and the Corresponding Algorithms}\label{Sec-Meth}
In this section, we first develop a binary characteristic set~(BCS) method, which provides a new perspective for solving Boolean equations over binary field. 
Then, we present the main theorems of this paper, which lay the theoretical foundation for the BCSFR algorithm (our main algorithm) proposed in Section~\ref{subsec:alg}.
At last, the BCSFR algorithm is elaborated with an example. The establishment of this algorithm (BCSFR) transforms the problem of determining the full rank of a symbolic matrix into finding the solutions of a system of polynomial equations. This allows us to efficiently find a series of equivalent conditions for a symbolic matrix to be full rank without having to perform complex calculations of symbolic eigenvalues or Gaussian elimination. 

\subsection{Binary Characteristic Set Method}
In this subsection, we introduce some preliminaries on the characteristic set method over the binary field, and then give the BCS procedure (Procedure~\ref{proc:BCS}).
Let ${\bf x} = (x_1,\ldots,x_n) \in \mathbb{F}_2^n$.

\begin{define}\label{def:canonical_form}
Let $f$ be a polynomial in $\mathbf{x}$.
The variable $x_c$ is called the \emph{leading variable} of $f$ if $c$ is the highest index of the variables appearing in $f$, which is denoted by $c={\rm cls}(f)$.
Then, due to $\mathbb{F}_2$, $f$ can be rewritten as
\begin{align}\label{csform-1}
f=I\cdot x_{c}+U,
\end{align}
where $x_{c}$ is the leading variable of $f$, and $I$ and $U$ are polynomials in $\{x_1,\ldots,x_n\}\backslash \{x_{c}\}$. 
The form~\eqref{csform-1} is called the \emph{canonical form} of $f$, denoted by ${\rm can}(f)=I\cdot x_{c}+U$, and $I$ is called the \emph{initial} of $f$, denoted by $\mbox{init}(f)=I$.
\end{define}
\begin{define}\label{def-cs-order}
Let ${\rm tdeg}(I)$ be the total degree of $I$, ${\rm term}(I)$ be the number of terms in $I$, and ${\rm term}(U)$ be the number of terms in $U$. 
Then $f_1\succ f_2$ is called {\it the first CS order} ({\it 1st-CSO}) if one of the following cases holds:
\begin{itemize}
\item[1)] ${\rm tdeg}(I_1)<{\rm tdeg}(I_2)$
\item[2)] ${\rm tdeg}(I_1)={\rm tdeg}(I_2),\ {\rm term}(I_1)<{\rm term}(I_2)$
\item[3)] ${\rm tdeg}(I_1)={\rm tdeg}(I_2),\ {\rm term}(I_1)={\rm term}(I_2),\ {\rm term}(U_1)<{\rm term}(U_2)$
\item[4)] ${\rm tdeg}(I_1)={\rm tdeg}(I_2),\ {\rm term}(I_1)={\rm term}(I_2),\ {\rm term}(U_1)={\rm term}(U_2),\ {\rm cls}(f_1)>{\rm cls}(f_2)$.
\end{itemize}
\end{define}

\begin{define}\label{def-cs-order-2}
Let ${\rm tdeg}(I)$ be the total degree of $I$, ${\rm term}(I)$ be the number of terms in $I$, and ${\rm term}(U)$ be the number of terms in $U$. 
Then $f_1\succ f_2$ is called {\it the second CS order} ({\it 2nd-CSO}) if one of the following cases holds:
\begin{itemize}
\item[1)] ${\rm cls}(f_1)>{\rm cls}(f_2)$
\item[2)] ${\rm cls}(f_1)={\rm cls}(f_2)$,\ ${\rm tdeg}(I_1)<{\rm tdeg}(I_2)$
\item[3)] ${\rm cls}(f_1)={\rm cls}(f_2)$,\ ${\rm tdeg}(I_1)={\rm tdeg}(I_2)$,\ ${\rm term}(I_1)<{\rm term}(I_2)$
\item[4)] ${\rm cls}(f_1)={\rm cls}(f_2)$,\ ${\rm tdeg}(I_1)={\rm tdeg}(I_2)$,\ ${\rm term}(I_1)={\rm term}(I_2)$,\ ${\rm term}(U_1)<{\rm term}(U_2)$.
\end{itemize}
\end{define}
By Definitions \ref{def-cs-order} and \ref{def-cs-order-2}, note that 1st-CSO and 2nd-CSO are two partial order relations.
If two polynomials $f_1$ and $f_2$ cannot be compared by 1st-CSO or 2nd-CSO, we say that $f_1$ and $f_2$ are equivalent, denoted as $f_1\sim f_2$.
These two partial order relations will be used in Algorithm~\ref{alg:BCSFR}.

Then, we give two examples for elaborating these two orders.
\begin{itemize}
\item[1)] For $f_1=x_2x_3+x_1+1$ and $f_2=x_1x_2x_4+x_1$, we have
\begin{align*}
&{\rm cls}(f_1)=3, {\rm cls}(f_2)=4,
I_1=x_2, I_2=x_1x_2,
U_1=x_1+1, U_2=x_1,\\
&{\rm tdeg}(I_1)={\rm tdeg}(x_2)=1,
{\rm tdeg}(I_2)={\rm tdeg}(x_1x_2)=2,\\
&{\rm term}(I_1)=1, {\rm term}(I_2)=1,
{\rm term}(U_1)=2, {\rm term}(U_2)=1.
\end{align*}
Based on the {\it 1st-CSO}, we obtain $f_1\succ f_2$, because ${\rm tdeg}(f_1)<{\rm tdeg}(f_2)$.
Based on the {\it 2nd-CSO}, we obtain $f_2\succ f_1$, because ${\rm cls}(f_2)>{\rm cls}(f_1)$.
\item[2)] For $f_1=x_2x_3+x_1+1$ and $f_2=(x_1+1)x_4+x_1$, we have
\begin{align*}
&{\rm cls}(f_1)=3, {\rm cls}(f_2)=4,
I_1=x_2, I_2=x_1+1,
U_1=x_1+1, U_2=x_1,\\
&{\rm tdeg}(I_1)={\rm tdeg}(x_2)=1,
{\rm tdeg}(I_2)={\rm tdeg}(x_1+1)=1,\\&
{\rm term}(I_1)=1, {\rm term}(I_2)=2,
{\rm term}(U_1)=2, {\rm term}(U_2)=1.
\end{align*}
Based on the {\it 1st-CSO}, we obtain $f_1\succ f_2$, because ${\rm tdeg}(f_1)={\rm tdeg}(f_2)$ and ${\rm term}(I_1)<{\rm term}(I_2)$.
Based on the {\it 2nd-CSO}, we obtain $f_2\succ f_1$, because ${\rm cls}(f_2)>{\rm cls}(f_1)$.
\end{itemize}

\begin{define}
Let $\P$ be a set of polynomials in $\bf x$. The zero set of $\P$ is defined as
\begin{equation}\nonumber
        \zero(\P) = \{{\bf x}: {\bf x}\text{ is a zero point for }\P=0\}.
\end{equation}
For a collection of sets of polynomials $\P^* = \{\P_1,\ldots,\P_m\}$, where every $\P_i$ is a set of polynomials in $\bf x$, we use the shorthand notation
\begin{equation}\nonumber
    \zero(\P^*) = \bigcup_{i=1}^m \zero(\P_i).
\end{equation}
\end{define}

\begin{define}
Let $\mathcal{P}$ and $\mathcal{T}$ be two distinct sets of polynomials in $\mathbf{x}$.
Then $\mathcal{T}$ is called a branch of $\mathcal{P}$ if $\Zero(\mathcal{T})\subset \Zero(\mathcal{P})$.
\end{define}

\begin{define}\label{zero-decomposition-alg}
Let ${\mathcal P}$ be a polynomial set in $\x$. 
We say {\it zero-decomposition} to mean that there exist several branches of ${\mathcal P}$, $\T_1,\ldots,\T_s$, such that 
\begin{align}\label{zero-dec}
\zero(\mathcal{P})=\bigcup_{i=1}^{s}\zero(\mathcal{T}_i),
\end{align}
which is denoted by ${\rm zd}(\mathcal{P})=\{\T_1,\ldots,\T_s\}$.
In particular, the decomposition of $\zero(\mathcal{P})$, i.e.~\eqref{zero-dec}, is called a zero orthogonal decomposition if $\zero(\mathcal{T}_i)\cap \zero(\mathcal{T}_j)=\emptyset$ for any $1\le i< j\le s$, and it is denoted by ${\rm zod}(\mathcal{P})=\{\T_1,\ldots,\T_s\}$.
\end{define}
Let $\can(f)=Ix_c+U$.
Then $f=0$ if and only if one of the following branches vanishes:  
\begin{align}\label{ID-1}
\mathcal{T}_1=\{I,U\},\ \mathcal{T}_2=\{I+1,x_c+U\}.
\end{align}

\begin{define}\label{def-inidec}
We say the \emph{initial decomposition} for $f$ to mean that $f$ is divided into two branches $\mathcal{T}_1$ and $\mathcal{T}_2$ in \eqref{ID-1}, which is denoted by ${\rm id}(f)=\{\T_1,\T_2\}$.
\end{define}



\begin{lemma}\label{T1T2}
If $f$ is divided into $\mathcal{T}_1$ and $\mathcal{T}_2$ by initial decomposition, then ${\rm Zero}(f)={\rm Zero}(\mathcal{T}_1)\cup {\rm Zero}(\mathcal{T}_2)$ and ${\rm Zero}({\mathcal T}_1) \cap {\rm Zero}({\mathcal T}_2) = \emptyset$.
\end{lemma}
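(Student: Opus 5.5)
The plan is a pointwise argument over $\mathbb{F}_2^n$ that uses only the canonical form from Definition~\ref{def:canonical_form}. Write $\can(f)=I\cdot x_c+U$, where $I$ and $U$ do not involve $x_c$. Fix an arbitrary point $\mathbf{a}\in\mathbb{F}_2^n$, and put $i=I(\mathbf{a})$ and $u=U(\mathbf{a})$. Because we are working over $\mathbb{F}_2$, $i\in\{0,1\}$. Evaluating the canonical form then gives $f(\mathbf{a})=i\,a_c+u$. I will split into the two cases $i=0$ and $i=1$.

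\emph{Case $i=0$.} Here $f(\mathbf{a})=u$. So $\mathbf{a}\in\zero(f)$ exactly when $u=0$, which is exactly when $\mathbf{a}\in\zero(\{I,U\})=\zero(\mathcal{T}_1)$. Moreover $I(\mathbf{a})+1=1\neq 0$, so $\mathbf{a}\notin\zero(\mathcal{T}_2)$.

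\emph{Case $i=1$.} Here $f(\mathbf{a})=a_c+u$. So $\mathbf{a}\in\zero(f)$ exactly when $a_c+U(\mathbf{a})=0$. Since $I(\mathbf{a})+1=0$ already holds, this is equivalent to $\mathbf{a}\in\zero(\{I+1,x_c+U\})=\zero(\mathcal{T}_2)$. Moreover $I(\mathbf{a})=1\neq 0$, so $\mathbf{a}\notin\zero(\mathcal{T}_1)$.

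Together the two cases show that $\mathbf{a}\in\zero(f)$ if and only if $\mathbf{a}\in\zero(\mathcal{T}_1)\cup\zero(\mathcal{T}_2)$, which is the union statement. For disjointness, observe that a common zero would satisfy both $I(\mathbf{a})=0$ and $I(\mathbf{a})+1=0$, i.e.\ $0=1$ in $\mathbb{F}_2$, which is impossible.

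The only point needing care is the dichotomy $I(\mathbf{a})\in\{0,1\}$. This relies on zeros being taken in $\mathbb{F}_2^n$, not in an algebraic closure, which is the setting fixed at the start of this section. Beyond that, the argument is routine.
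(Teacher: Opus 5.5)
Your proposal is correct and takes essentially the same route as the paper: a pointwise case split on $I(\mathbf{a})\in\{0,1\}$ gives the union, and the contradiction $I(\mathbf{a})=I(\mathbf{a})+1=0$ gives disjointness. The only difference is cosmetic. You prove both inclusions at once through a per-case equivalence, whereas the paper proves $\zero(f)\subset\zero(\mathcal{T}_1)\cup\zero(\mathcal{T}_2)$ by cases and calls the reverse inclusion straightforward.
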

\begin{IEEEproof}
Choose $\x^* \in \zero(f)$. We have $0 = f(\x^*) = I(\x^*) x_c(\x^*) + U(\x^*)$. If $I(\x^*) = 0$, then $U(\x^*) = 0$, which implies $\x^* \in \zero(\T_1)$. If $I(\x^*) = 1$, then $x_c(\x^*) + U(\x^*) = 0$, which implies $\x^* \in \zero(\T_2)$. Thus, $\x^* \in \zero(\mathcal{T}_1)\cup {\rm Zero}(\mathcal{T}_2)$, which implies $\zero(f) \subset \zero(\mathcal{T}_1)\cup {\rm Zero}(\mathcal{T}_2)$. On the other hand, it is straightforward to see that every $\x^* \in \zero(\T_1)$ or $\x^* \in \zero(\T_2)$ is a zero point of $f$. Thus, $\zero(\mathcal{T}_1)\cup {\rm Zero}(\mathcal{T}_2) \subset \zero(f)$. Combining the two cases, we have ${\rm Zero}(f)={\rm Zero}(\mathcal{T}_1)\cup {\rm Zero}(\mathcal{T}_2)$.

If $\x^* \in {\rm Zero}({\mathcal T}_1) \cap {\rm Zero}({\mathcal T}_2)$, then $I(\x^*) = I(\x^*) + 1 = 0$, which makes a contradiction. Thus, ${\rm Zero}({\mathcal T}_1) \cap {\rm Zero}({\mathcal T}_2) = \emptyset$.
\end{IEEEproof}

\begin{define}\label{def-id4P}
Let $\P=\{f_i,i=1\ldots,m\}$ be a polynomial set in $\x$.
Choose a non-constant $f_i$ from $\P$ and let ${\rm id}(f_i)=\{\T_1,\T_2\}$.
Let $\mathcal{Q}_0=\{I,U\}\cup (\mathcal{P}\backslash \{f_i\})$, $\mathcal{U}=\{x_c+U\}$, and $\mathcal{Q}_1={\rm subs}(x_c=U,\{I+1\}\cup (\mathcal{P}\backslash \{f_i\}))$. We say the {\it initial decomposition} for $\P$ to mean that $\P$ is divided into three branches $\mathcal{Q}_0$, $\mathcal{Q}_1$, and $\mathcal{U}$, denoted by ${\rm id}(\P)=\{\mathcal{Q}_0, \mathcal{Q}_1, \mathcal{U} \}$.
\end{define}

Note that in Definition~\ref{def-id4P}, the order of choosing $f_i$ from $\P$ is not specified. In the algorithms proposed later, we will use the 1st-CSO and the 2nd-CSO defined by Definitions~\ref{def-cs-order} and \ref{def-cs-order-2}, respectively. By Lemma~\ref{T1T2}, we have for ${\rm id}(\P)=\{\mathcal{Q}_0, \mathcal{Q}_1, \mathcal{U} \}$, $\zero(\P) = \zero(\mathcal{Q}_0) \cup \zero (\mathcal{Q}_1 \cup \mathcal{U})$ and $\zero(\mathcal{Q}_0) \cap \zero (\mathcal{Q}_1 \cup \mathcal{U}) = \emptyset$.

\begin{define}\label{def:CS}
Let $\T=\{f_i,i=1,\ldots,m\}$ be a set of polynomials in $\mathbf{x}$, where $\can(f_i)=I_i x_{c_i}+U_i$. Then $\T$ is called a \emph{monic characteristic set} if 
\begin{itemize}
\item[1)] $I_i=1$ 
\item[2)] $c_i<c_j$ for $1\le i<j\le m$
\item[3)] $U_i=g_i(x_{k_1},\ldots,x_{k_{n-m}})$, i.e., $U_i$ is a polynomial in $x_{k_1},\ldots,x_{k_{n-m}}$,
where $$\{k_1,\ldots,k_{n-m}\} = \{1,\ldots,n\}\backslash \{c_1,\ldots,c_m\}.$$
\end{itemize}
Following the above, $|{\T}|=m$ is called the length of $\T$, $\{x_{k_1},\ldots,x_{k_{n-m}}\}$ is called the free variables set of $\T$, denoted by ${\rm fvs}(\T)=\{x_{k_1},\ldots,x_{k_{n-m}}\}$, and $n-m$ is called the degree of freedom (df) of $\T$, denoted by ${\rm df}(\T)=n-m$.
\end{define}

Note that in the binary field, ${\rm df}(\T)$ determines the number of zeros of $\T$, i.e., $|\zero(\T)|=2^{{\rm df}(\T)}$.
For convenience, in the rest of this paper, we say \emph{characteristic set} to mean a monic characteristic set.

\begin{define}\label{def:trunc}
Let ${\bf x} = (x_1,\ldots,x_n)$ and $\tilde{\bf x} = (x_{n+1},\ldots,x_{n+v})$. Let $\mathcal T$ be a characteristic set in $({\bf x}, \tilde{\bf x})$, which has the general form:
\begin{align*}
\{x_{c_i} + U_i: 1\le i\le {m}, 1\le c_1<\cdots<c_{\ell} \le n < c_{\ell+1} < \cdots < c_{m} \le n+v\}.
\end{align*}
Then $\trunc_{\bf x}(\T)$ is called the truncation of $\mathcal{T}$ onto the $\bf x$-variables if
\begin{equation}\nonumber
\trunc_{\bf x}(\T) = \{x_{c_j} + U_j: ~1\le j\le \ell, 1\le c_1<\cdots<c_{\ell}\le n\}.
\end{equation}
\end{define}

\begin{define}
Let $\mathcal{P}$ be a polynomial set in $\x$. Then ${\rm zod}(\mathcal{P})=\{\T_1,\ldots,\T_s\}$ is called a zero orthogonal characteristic decomposition of $\mathcal{P}$ if $\{\T_1,\ldots,\T_s\}$ is a collection of characteristic sets, which is denoted by ${\rm zocd}(\mathcal{P})=\{\T_1,\ldots,\T_s\}$.
\end{define}

Next, we will present a procedure to illustrate the computational process of the BCS method.

\begin{proc}[BCS Procedure]
\label{proc:BCS}
Let ${\mathcal P} = \{f_1,\ldots,f_m\}$ be a polynomial set. 

\textbf{Step 1}. Initialize two collections of polynomial sets $\mathcal{P}^* = \{\P\}$ and $\T^* = \emptyset$.

\textbf{Step 2}. Choose an element $\mathcal{Q}$ from $\mathcal{P}^*$, which is a set of polynomials. Let $\mathcal{P}^* \leftarrow \mathcal{P}^* \setminus \{\mathcal{Q}\}$ and $\T = \emptyset$.

\textbf{Step 3}. Perform initial decomposition for $\mathcal{Q}$. By Definition~\ref{def-id4P}, let ${\rm id}(\mathcal{Q}) = \{\mathcal{Q}_0,\mathcal{Q}_1,\mathcal{U}\}$, where $f = Ix_c + U \in \mathcal{Q}$ is the polynomial chosen for initial decomposition, $\mathcal{Q}_0=\{I,U\}\cup (\mathcal{Q}\backslash \{f\})$, $\mathcal{U}=\{x_c+U\}$, and $\mathcal{Q}_1={\rm subs}(x_c=U,\{I+1\}\cup (\mathcal{Q}\backslash \{f\}))$. We obtain
\begin{align}\label{eq:proc_eq_1}
\zero(\mathcal{Q})=\zero(\mathcal{Q}_0)\cup \zero(\mathcal{Q}_1\cup \mathcal{U})
\end{align}
and
\begin{align}\label{eq:proc_eq_2}
\zero(\mathcal{Q}_0)\cap \zero(\mathcal{Q}_1\cup \mathcal{U}) = \emptyset.
\end{align}

\textbf{Step 4}.
Update $\mathcal{P}^* \leftarrow \mathcal{P}^* \cup \{\mathcal{Q}_0 \cup \T\}$, $\T \leftarrow {\rm subs}(x_c=U,\mathcal{T})\cup\mathcal{U}$, and $\mathcal{Q} \leftarrow \mathcal{Q}_1$. We obtain
\begin{equation}\label{eq:proc_eq_3}
    \zero(\P) = \zero(\P^*) \cup \zero(\T^*) \cup \zero(\mathcal{Q}\cup \T)
\end{equation}
and for every two distinct elements $\mathcal{P}_1$ and $\mathcal{P}_2$ in $\mathcal{P}^* \cup \T^* \cup \{\mathcal{Q}\cup \T\}$
\begin{equation}\label{eq:proc_eq_4}
    \zero(\mathcal{P}_1) \cap \zero(\mathcal{P}_2) = \emptyset.
\end{equation}

\textbf{Step 5}. Repeat Steps 3 and 4 (i.e., perform initial decomposition for $\mathcal{Q}$) until $\mathcal{Q}\setminus \{0\} = \emptyset$. If $\T \ne \emptyset$, let $\T^* \leftarrow \T^* \cup \{\T\}$.

\textbf{Step 6}.
Repeat Steps 2--5 until $\P^* = \emptyset$. We finally obtain a zero orthogonal decomposition of $\mathcal{P}$,
\begin{align}\label{eq:proc_eq_5}
\zero(\mathcal{P})=\bigcup_{i=1}^s \zero(\mathcal{T}_i), 
\end{align}
where each $\mathcal{T}_i$ corresponds to a $\T$ obtained by Step~5, and ${\rm \Zero}({\mathcal T}_i) \cap \Zero({\mathcal T}_j) = \emptyset$ for any $1\le i < j\le s$.
\end{proc}

\begin{remark}
    The relations \eqref{eq:proc_eq_1}--\eqref{eq:proc_eq_5} can be explained as follows: 
    \begin{enumerate}
        \item \eqref{eq:proc_eq_1} and \eqref{eq:proc_eq_2} are consequences of Lemma~\ref{T1T2}.
        \item \eqref{eq:proc_eq_3} and \eqref{eq:proc_eq_4} can be proved by induction based on \eqref{eq:proc_eq_1} and \eqref{eq:proc_eq_2}.
        \item \eqref{eq:proc_eq_5} is a consequence of \eqref{eq:proc_eq_3} and \eqref{eq:proc_eq_4}.
    \end{enumerate}
\end{remark}

Accordingly, we give an example to elaborate on Procedure~\ref{proc:BCS}.
\begin{example}
Consider $\mathcal{P}=\{f_1=x_2x_3+x_1+1,\; f_2=x_1x_2x_4+x_1\}$.
For ease of description, we use the $1st$-CSO in Definition~\ref{def-cs-order}.

\noindent\textbf{Step 1.}
Set $\mathcal{P}^*=\{\mathcal{P}\}$ and $\mathcal{T}^*=\emptyset$.

\noindent\textbf{Round 1 (Step 2--5).}
Step 2: Choose $\mathcal{Q}=\mathcal{P}$ from $\mathcal{P}^*$. Let $\P^* \leftarrow \P^* \setminus \{\mathcal{Q}\} = \emptyset$ and $\mathcal{T}=\emptyset$.

\noindent\textbf{Repeat Step 3--4.}
Step 3: Choosing $f_1 = x_2 x_3 + x_1 + 1$ and performing ${\rm id}(\mathcal Q)$, we obtain
\[
\mathcal{Q}_0=\{x_2, x_1+1, f_2\},\quad
\mathcal{U}=\{x_3+x_1+1\},\quad
\mathcal{Q}_1=\subs(x_3=x_1+1,\{x_2+1, f_2\}).
\]

\noindent{Step 4:} We update
\[
\mathcal{P}^*\leftarrow \mathcal{P}^*\cup\{\mathcal{Q}_0\cup \mathcal{T}\} = \big\{\{x_2, x_1+1, f_2\}\big\},\]
\[\mathcal{T}\leftarrow \subs(x_3=x_1+1,\mathcal{T})\cup\mathcal{U}=\{x_3+x_1+1\},
\]
\[
\mathcal{Q}\leftarrow \mathcal{Q}_1=\{x_2+1,\; x_1x_2x_4+x_1\}.
\]

\noindent{Step 3:} Choosing $x_2+1$ and performing ${\rm id}(\mathcal Q)$, we obtain 
\[
\mathcal{Q}_0=\{1, x_1x_2x_4+x_1\},\quad
\mathcal{U}=\{x_2 + 1\},\quad
\mathcal{Q}_1=\subs(x_2 = 1,\{0,x_1x_4+x_1\}).
\]

\noindent{Step 4:} Since $1 \in \mathcal{Q}_0$ leads to a contradiction, we discard $\mathcal{Q}_0$ and do not need to update $\P^*$. We update
\[
\mathcal{T}\leftarrow \subs(x_2=1,\mathcal{T})\cup\mathcal{U}=\{x_3+x_1+1, x_2+1\},
\]
\[
\mathcal{Q}\leftarrow \mathcal{Q}_1=\{x_1x_4+x_1\}.
\]

\noindent{Step 3:} Choosing $x_1x_4+x_1$ and performing ${\rm id}(\mathcal Q)$, we obtain
\[
\mathcal{Q}_0=\{x_1\},\quad
\mathcal{U}=\{x_4+x_1\},\quad
\mathcal{Q}_1=\subs(x_4=x_1,\{x_1+1\}).
\]

\noindent{Step 4:} Update
\[
\mathcal{P}^*\leftarrow \mathcal{P}^*\cup\{\mathcal{Q}_0\cup \mathcal{T}\}
=\big\{\{x_2, x_1+1, f_2\},\{x_1, x_3+x_1+1, x_2+1\}\big\},
\]
\[
\mathcal{T}\leftarrow \subs(x_4=x_1,\mathcal{T})\cup\mathcal{U}
=\{x_3+x_1+1, x_2+1, x_4+x_1\},\]
\[
\mathcal{Q}\leftarrow \mathcal{Q}_1=\{x_1+1\}.
\]

\noindent{Step 3:}  Choosing $x_1 + 1$ and performing ${\rm id}(\mathcal Q)$, we obtain
\[
\mathcal{Q}_0=\{1\},\quad
\mathcal{U}=\{x_1 + 1\},\quad
\mathcal{Q}_1=\subs(x_1 = 1,\{0\}).
\]

\noindent{Step 4:} Since $1 \in \mathcal{Q}_0$ leads to a contradiction, we discard $\mathcal{Q}_0$ and do not need to update $\P^*$. We update
\[
\mathcal{T}\leftarrow \subs(x_1=1,\mathcal{T})\cup\mathcal{U}=\{x_1+1, x_2+1, x_3, x_4+1\},
\]
\[
\mathcal{Q}\leftarrow \mathcal{Q}_1=\{0\}.
\]

\noindent{Step 5:}
Since $\mathcal{Q} \setminus \{0\} = \emptyset$, append $\mathcal{T}$ to $\mathcal{T}^*$.

\noindent\textbf{Round 2 (Step 2--5).} Step 2: Choose $\mathcal{Q}=\{x_2, x_1+1, f_2 = x_1x_2x_4+x_1\}$ from $\P^*$.
Let $\P^* \leftarrow \P^* \setminus \{\mathcal{Q}\} = \big\{\{x_1, x_3+x_1+1, x_2+1\}\big\}$ and $\mathcal{T}=\emptyset$.

\noindent Step 3--5: Similar to the process of Round 1, we find that this branch $\mathcal{Q}$ has no zeros. Consequently, no new characteristic set will be appended to $\T^*$.


\noindent\textbf{Round 3 (Step 2--5).} Choose $\mathcal{Q}=\{x_1, x_3+x_1+1, x_2+1\}$ from $\P^*$.
Let $\P^* \leftarrow \P^* \setminus \{\mathcal{Q}\} = \emptyset$ and $\mathcal{T}=\emptyset$.

\noindent Step 3--5: Similar to the process of Round 1, we obtain $\P^* = \emptyset$ and $\T = \{x_1, x_2+1, x_3+1\}$ at the end of this round. Append $\T$ to $\T^*$.


\noindent\textbf{Step 6.}
Since $\P^* = \emptyset$, Procedure~\ref{proc:BCS} ends and outputs
\[
\zero(\mathcal{P})=\zero(\{x_1+1,x_2+1,x_3, x_4+1\}) \cup \zero(\{x_1, x_2+1, x_3+1\}).
\]
\end{example}

\subsection{Main Theorems}\label{subsec:main_theorem}
Let ${\bf x} = (x_1,\ldots,x_n)$ and $\tilde{\bf x} = (x_{n+1},\ldots,x_{n+v})$ be two vectors of binary variables, which are referred to as the \emph{${\bf x}$-variables} and the \emph{$\tilde{\bf x}$-variables}, respectively. 
\begin{define}\label{def:x_tilde_linear_poly}
    A polynomial set ${\mathcal P} = \{f_1,\ldots,f_{\ell_1},r_1,\ldots,r_{\ell_2}\}$ in $({\bf x},\tilde{{\bf x}})$ is called an $\tilde{\bf x}$-linear polynomial set if the following conditions are satisfied:
    \begin{enumerate}
    \item $f_1,\ldots,f_{\ell_1}$ are polynomials in $({\bf x},\tilde{\bf x})$
    \item $r_1,\ldots,r_{\ell_2}$ are polynomials in ${\bf x}$
    \item Each $f_i$ has the form
    \begin{equation}
        f_i({\bf x},\tilde{\bf x}) = \sum_{j=1}^{v} q_{i,j}({\bf x}) x_{n+j},
    \end{equation}
    where $q_{i,j}({\bf x})$ is a polynomial in $\bf x$.
\end{enumerate}
\end{define}
Note that if $v = 0$, ${\mathcal P}$ is a general polynomial set without any specification and so the problem degenerates. In the following discussion, we assume $v \ge 1$. 

Let $\mathcal{P}\subset \mathbb{F}_2[x_1, \ldots, x_{n+v}]$ be a polynomial set.

\begin{define}
Let
\begin{equation}\label{eq:projection}
\proj_{\x}(\zero({\mathcal P})) = \left\{ {\mathbf{x}} \in \mathbb{F}_2^n : \exists\  \tilde{\bf x} \in \mathbb{F}_2^v \text{ such that } ({\mathbf{x}}, \tilde{\bf x}) \in \zero({\mathcal P}) \right\}.
\end{equation}
Then $\proj_{\x}(\zero({\mathcal P}))$ is called the projection of $\zero({\mathcal P})$ onto the ${\bf x}$-variables.
\end{define}

\begin{remark}
For a characteristic set defined as Definition~\ref{def:CS}, we have $\zero(\trunc_{\bf x}(\T)) = \proj_{\bf x}(\zero(\T))$.
\end{remark}

\begin{define}\label{def:projection}
Let ${\rm zocd}(\mathcal{P})=\{\mathcal{T}_1,\ldots,\T_s\}$, then ${\rm zocd}(\mathcal{P})$ is called a zero $\x$-projection orthogonal characteristic decomposition of $\mathcal{P}$ if
	\begin{equation}\label{eq:projection_orthogonal}
\proj_{\x}(\zero(\mathcal{T}_i)) \cap \proj_{\x}(\zero(\mathcal{T}_j)) = \emptyset,
	\end{equation}
for any $1\le i<j\le s$, which is denoted by ${\rm zxocd}(\mathcal{P})=\{\mathcal{T}_1,\ldots,\T_s\}$.
\end{define}

\begin{define}
Let
\begin{equation}\label{eq:set_S}
    {\mathcal S} = \{ {\bf x}^* \in \mathbb{F}_2^n: ({\bf x}^*, {\bf 0}) \in \zero({\mathcal P}); ({\bf x}^*, \tilde{\bf x}) \notin \zero({\mathcal P}) \text{ for any }\tilde{\bf x} \ne {\bf 0}  \}.
\end{equation}
Then, $\mathcal{S}$ is called the feasible solution set of $\mathcal{P}$, denoted by ${\rm fss}(\mathcal{P})=\mathcal{S}$.
\end{define}


\begin{define}\label{eq:admissable}
    Let $\T$ be a characteristic set in $(\x,\tilde{\x})$. Then $\T$ is called admissible if $\T$ can be written in the form:
    \begin{align*}
        &{\mathcal T}'({\bf x})\\
        &x_{n+1}\\
        &x_{n+2}\\
        &\vdots\\
        &x_{n+v},
    \end{align*}
    where ${\mathcal T}'({\bf x})$ is a characteristic set in the ${\bf x}$-variables.
\end{define}

\begin{define}\label{def:admissiblezero}
Let $\T^*={\rm zxocd}(\mathcal{P})$. Then $\mathcal{X}$ is called the admissible zero set of $\mathcal{P}$ if
\begin{equation}\label{eq:X_adm}
    {\mathcal X} = \bigcup_{{\mathcal T} \in {\mathcal T}^*, {\mathcal T} \text{ is admissible}} \proj_{\x}(\zero({\mathcal T})).
\end{equation}
\end{define}


\begin{lemma}\label{lemma:necessary}
    ${\mathcal S} \subset {\mathcal X}$.
\end{lemma}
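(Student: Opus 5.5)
Fix ${\bf x}^*\in\mathcal S$; the plan is to locate a characteristic set in $\T^*={\rm zxocd}(\P)$ that contains the zero $({\bf x}^*,{\bf 0})$ and show it must be admissible. By the definition of $\mathcal S$ in \eqref{eq:set_S}, $({\bf x}^*,{\bf 0})\in\zero(\P)$. Since $\T^*=\{\T_1,\ldots,\T_s\}$ is in particular a zero decomposition, $\zero(\P)=\bigcup_i\zero(\T_i)$. Hence there is an index $i$ with $({\bf x}^*,{\bf 0})\in\zero(\T_i)$, and so ${\bf x}^*\in\proj_{\bf x}(\zero(\T_i))$. It then suffices to show that $\T_i$ is admissible, since ${\bf x}^*$ would then lie in the union \eqref{eq:X_adm}.

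To prove admissibility, write $\T_i$ as a monic characteristic set in $({\bf x},\tilde{\bf x})$ with leading variables $c_1<\cdots<c_m$. Suppose some $\tilde{\bf x}$-variable $x_{n+j}$ were not a leading variable of $\T_i$, so that it is a free variable. Because every $U_k$ depends only on free variables and each non-free variable is determined by them, the zeros of $\T_i$ are parametrized bijectively by arbitrary assignments of the free variables. Start from the zero $({\bf x}^*,{\bf 0})$, keep all other free variables fixed, and flip $x_{n+j}$ to $1$. The equations then determine the remaining (leading) variables, giving another zero $({\bf x}',\tilde{\bf x}')\in\zero(\T_i)\subset\zero(\P)$ with $\tilde x'_{n+j}=1$, so $\tilde{\bf x}'\neq{\bf 0}$.

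The difficulty is that ${\bf x}'$ may differ from ${\bf x}^*$, because leading ${\bf x}$-variables could have $U_k$ depending on $x_{n+j}$. To handle this, I would use the structure of the characteristic set. In the ordering, the variables $x_{n+1},\dots,x_{n+v}$ have the highest indices, and condition 3) of Definition~\ref{def:CS} does not prevent an ${\bf x}$-leading polynomial from involving a free $\tilde{\bf x}$-variable. So I would rely instead on the projection-orthogonality property \eqref{eq:projection_orthogonal}, together with the $\tilde{\bf x}$-linearity of $\P$. For fixed ${\bf x}$, the set $\{\tilde{\bf x}:({\bf x},\tilde{\bf x})\in\zero(\P)\}$ is a linear subspace, and at ${\bf x}^*$ it equals $\{{\bf 0}\}$. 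I expect the cleanest route is to argue that the procedure eliminates $\tilde{\bf x}$-variables last, so that ${\bf x}$-leading polynomials involve only ${\bf x}$-variables; in that case ${\bf x}'={\bf x}^*$, and we obtain $({\bf x}^*,\tilde{\bf x}')\in\zero(\P)$ with $\tilde{\bf x}'\ne{\bf 0}$, contradicting ${\bf x}^*\in\mathcal S$. If that structural fact is unavailable, I would fall back on the remark $\zero(\trunc_{\bf x}(\T))=\proj_{\bf x}(\zero(\T))$, which implicitly assumes the truncation involves only ${\bf x}$-variables, and use it to reach the same conclusion.

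Once every $x_{n+j}$ is known to be a leading variable, each such polynomial has the form $x_{n+j}+U$ with $U$ a function of the free variables, all of which are ${\bf x}$-variables. The next step is to show $U=0$ identically. On each zero of $\T_i$ with ${\bf x}$-part ${\bf y}$, the vector $\tilde{\bf x}=U({\bf y})$ lies in the solution subspace of $\P$ at ${\bf y}$. In particular, at ${\bf x}^*$ we know $\tilde{\bf x}={\bf 0}$. I expect this step — forcing the $\tilde{\bf x}$-part of $\T_i$ to be exactly $x_{n+1},\dots,x_{n+v}$, rather than merely vanishing at ${\bf x}^*$ — to be the main obstacle. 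I would first try to derive it from how the BCS procedure handles linear forms: substitutions of the shape $x_{n+j}=\sum q\,x_{n+k}$ propagate to homogeneous linear forms in the remaining $\tilde{\bf x}$, and in the final branch these yield $x_{n+j}+0$ on the part where all the linear forms are forced.
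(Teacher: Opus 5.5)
Your first two steps match the paper's proof. You locate the $\T\in\T^*$ with $({\bf x}^*,{\bf 0})\in\zero(\T)$, and you rule out a free $\tilde{\bf x}$-variable by flipping it, which is the paper's Case~1. The worry you raise in that step is unfounded. By Definition~\ref{def:canonical_form}, the leading variable of a polynomial is its highest-indexed variable, so every element $x_c+U$ of $\T$ with $c\le n$ involves only $x_1,\ldots,x_c$. Flipping a free $x_{n+j}$ therefore leaves the ${\bf x}$-part equal to ${\bf x}^*$. This gives $({\bf x}^*,\tilde{\bf x}')\in\zero(\P)$ with $\tilde{\bf x}'\neq{\bf 0}$ directly, with no appeal to the elimination order of the procedure or to the truncation remark.

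The genuine gap is the step you flag as the main obstacle: proving $U_k\equiv 0$ for the $\tilde{\bf x}$-leading polynomials. Your route through the internals of the BCS procedure is not carried out. Even if it were, it would only cover decompositions produced by that algorithm, whereas the lemma is about an arbitrary $\T^*={\rm zxocd}(\P)$. The missing argument uses exactly the two ingredients you listed, projection orthogonality and $\tilde{\bf x}$-linearity, but in this step rather than the previous one.

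\begin{itemize}
\item Once no $\tilde{\bf x}$-variable is free, for each ${\bf y}\in\proj_{\bf x}(\zero(\T))$ the fiber of $\zero(\T)$ over ${\bf y}$ is the single point $(U_1({\bf y}),\ldots,U_v({\bf y}))$, and this point lies in $\zero(\P)$.
\item By $\tilde{\bf x}$-linearity, $\{\tilde{\bf x}:({\bf y},\tilde{\bf x})\in\zero(\P)\}$ is either empty or a linear subspace. Being nonempty, it contains ${\bf 0}$, so $({\bf y},{\bf 0})\in\zero(\T_j)$ for some $\T_j\in\T^*$.
\item Then ${\bf y}$ lies in both $\proj_{\bf x}(\zero(\T_j))$ and $\proj_{\bf x}(\zero(\T))$, so \eqref{eq:projection_orthogonal} forces $\T_j=\T$. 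Uniqueness of the fiber then gives $U_k({\bf y})=0$ for all $k$.
\item Every assignment of the free variables arises this way, and the $U_k$ are Boolean (multilinear) polynomials, so $U_k\equiv 0$ and $\T$ is admissible.
\end{itemize}

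This is what the paper's Case~2 relies on when it asserts $({\bf x}(\phi),{\bf 0})\in\zero(\T)$ for every assignment $\phi$. Linearity cannot be dropped here. For $n=v=1$ and the non-$\tilde{\bf x}$-linear set $\P=\{x_2+x_1\}$, the collection $\{\{x_2+x_1\}\}$ is a valid ${\rm zxocd}(\P)$ and $0\in\mathcal S$, yet its only characteristic set is not admissible.
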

\begin{IEEEproof}
    If ${\mathcal S} = \emptyset$, the lemma is proved. Then, we assume ${\mathcal S} \ne \emptyset$. Let ${\bf x}^* \in {\mathcal S}$. By~\eqref{eq:set_S}, we have $({\bf x}^*, {\bf 0})\in\zero({\mathcal P})$. By Definitions \ref{def:projection} and \ref{def:admissiblezero}, we have $({\bf x}^*, {\bf 0}) \in \zero({\mathcal T})$ for a unique ${\mathcal T} \in {\mathcal T}^*$. To prove $x^*\in\mathcal{X}$, we only need to prove that ${\mathcal T}$ is admissible. We prove this by considering the following two cases.

    \emph{1) $x_{n+i}\in {\rm fvs}(\T)$ for some $1 \le i \le v$.} Without loss of generality, we assume that $x_{n+1}$ is a free variable. Then, by assigning $x_{n+1} = 1$, there exists $\tilde{\bf x} \ne \bf 0$ such that $({\bf x}^*,\tilde{\bf x})\in\zero({\mathcal P})$. By~\eqref{eq:set_S}, this contradicts the fact that ${\bf x}^* \in {\mathcal S}$.

    \emph{2) $x_{n+i}\notin {\rm fvs}(\T)$ for any $1 \le i \le v$.} Let ${\rm fvs}(\T)=\{x_{j_1},\ldots,x_{j_t}\}, \mbox{ where } 1\le j_1<\cdots< j_t\le n$.
    
    If $t=0$, i.e., ${\rm df}(\T) = 0$, then $\T$ has a unique zero $({\bf x}^*, {\bf 0})$, which implies that $\T$ is admissible.
    
    If $t>0$, i.e., ${\rm df}(\T)>0$, then $\mathcal T$ can be written as
    \begin{align*}
    &{\mathcal T}'({\bf x})\\
    &x_{n+1}+U_1 \\
    &x_{n+2}+U_2 \\
    &\vdots \\
    &x_{n+v}+U_v,
    \end{align*}
    where ${\mathcal T}'({\bf x})$ is a characteristic set in the ${\bf x}$-variables, and $U_k$ is a polynomial in ${\rm fvs}(\T)$. Assume that there exists an assignment of ${\rm fvs}(\T)$, denoted by $\phi$, such that $x_{n+i} \ne 0$ for some $1\le i\le v$. Let ${\bf x}(\phi)$ and  $\tilde{\bf x}(\phi)$ denote the value of $\bf x$ and $\tilde{\bf x}$ under the assignment $\phi$, respectively. Since $(\x(\phi),0)\in\zero(\T)$ and ${\rm fvs}(\T)$ only contains $\bf x$-variables, we obtain  $\tilde{\bf x}(\phi) = {\bf 0}$, which contradicts that $x_{n+i}\neq0$ for some $1\le i\le v$.
    Thus, $\tilde{\x}(\phi)= {\bf 0}$ for any $\phi$, which implies that
    $U_1 \equiv \cdots \equiv U_v \equiv 0$. Therefore, $\T$ is admissible. 

   Combining cases 1) and 2), we proved $x^*\in \mathcal{X}$, thus $\mathcal {S}\subset\mathcal{X}$.
\end{IEEEproof}

\begin{lemma}\label{lemma:sufficient}
     ${\mathcal X} \subset {\mathcal S}$.
\end{lemma}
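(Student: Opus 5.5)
Take ${\bf x}^* \in {\mathcal X}$. By Definition~\ref{def:admissiblezero} there is an admissible $\T \in \T^* = {\rm zxocd}(\mathcal{P})$ with ${\bf x}^* \in \proj_{\x}(\zero(\T))$. We have to check the two conditions in \eqref{eq:set_S}: that $({\bf x}^*,{\bf 0}) \in \zero(\mathcal P)$, and that $({\bf x}^*,\tilde{\bf x}) \notin \zero(\mathcal P)$ for every $\tilde{\bf x} \ne {\bf 0}$.

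\emph{First condition.} Since $\T$ is admissible, it has the form $\T'({\bf x}) \cup \{x_{n+1},\ldots,x_{n+v}\}$. Hence
\[
\zero(\T) = \zero(\T') \times \{{\bf 0}\}.
\]
So ${\bf x}^* \in \proj_{\x}(\zero(\T))$ forces $({\bf x}^*,{\bf 0}) \in \zero(\T)$. Because $\T$ is a branch of $\mathcal P$, i.e.\ $\zero(\T) \subset \zero(\mathcal P)$ by the decomposition \eqref{zero-dec}, we get $({\bf x}^*,{\bf 0}) \in \zero(\mathcal P)$.

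\emph{Second condition.} Argue by contradiction. Suppose $({\bf x}^*,\tilde{\bf x}) \in \zero(\mathcal P)$ for some $\tilde{\bf x} \ne {\bf 0}$. Since $\zero(\mathcal P) = \bigcup_i \zero(\T_i)$, there is some $\T_j \in \T^*$ containing this point. Then ${\bf x}^* \in \proj_{\x}(\zero(\T_j))$.

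If $\T_j \ne \T$, this contradicts the projection orthogonality \eqref{eq:projection_orthogonal}, because ${\bf x}^*$ also lies in $\proj_{\x}(\zero(\T))$.

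If $\T_j = \T$, then $\tilde{\bf x} = {\bf 0}$ by the form of $\zero(\T)$ above, again a contradiction.

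Therefore ${\bf x}^* \in \mathcal S$, and hence ${\mathcal X} \subset {\mathcal S}$.

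The whole argument rests on the ${\bf x}$-projection orthogonality of the decomposition; it is what rules out a nonzero $\tilde{\bf x}$ coming from another branch. I expect no other obstacle. The $\tilde{\bf x}$-linear structure of $\mathcal P$ is not needed for this direction: it only guarantees that $({\bf x},{\bf 0})$ always satisfies the $f_i$.
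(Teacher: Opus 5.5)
Your proof is correct and follows essentially the same route as the paper. Admissibility of $\T$ gives $({\bf x}^*,{\bf 0})\in\zero(\T)\subset\zero(\mathcal P)$ and excludes any nonzero $\tilde{\bf x}$ within $\T$, while the ${\bf x}$-projection orthogonality of ${\rm zxocd}(\mathcal P)$ excludes every other branch. You are slightly more explicit than the paper in spelling out $\zero(\T)=\zero(\T')\times\{{\bf 0}\}$ and the covering step $\zero(\mathcal P)=\bigcup_i\zero(\T_i)$, which the paper leaves implicit.
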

\begin{IEEEproof}
    If ${\mathcal X} = \emptyset$, the lemma is proved. Then, we assume ${\mathcal X} \ne \emptyset$. Let ${\bf x}^* \in {\mathcal X}$. Since $\T^*={\rm zxocd}(\mathcal{P})$ and $({\bf x}^*,{\bf 0})\in\zero(\T)$ for some $\mathcal T \in {\mathcal T}^*$, we have the fact that $({\bf x}^*,{\bf 0})$ is a solution to ${\mathcal P} = 0$. To prove ${\bf x}^* \in {\mathcal S}$, we only need to prove that $({\bf x}^*,\tilde{\bf x})$ is not a solution to ${\mathcal P} = 0$ for any $\tilde{\bf x} \ne {\bf 0}$. Again since $\T^*={\rm zxocd}(\mathcal{P})$, we have $({\bf x}^*,\tilde{\bf x})\notin \zero(\T')$ for any $\T'\neq\T$.
    By Definition \ref{def:admissiblezero}, $\T$ is admissible. Then, by Definition \ref{eq:admissable}, $({\bf x}^*,\tilde{\bf x}) \notin \zero({\mathcal T})$ for any $\tilde{\bf x} \in (\mathbb{F}_2^v \setminus \{{\bf 0}\})$.
    Thus, we proved that $({\bf x}^*, {\bf 0}) \in \zero({\mathcal P})\mbox{ and } ({\bf x}^*, \tilde{\bf x}) \notin \zero({\mathcal P}) \text{ for any }\tilde{\bf x} \ne {\bf 0} $, which implies $\x^*\in \mathcal{S}$.
    Therefore, $\mathcal{X}\subset \mathcal{S}$.
    %
%
%
%
\end{IEEEproof}

Recall that Problem~\ref{prob:P1} is equivalent to Problem~\ref{prob:P1a}.
Then the two feasible sets of Problem~\ref{prob:P1} and Problem~\ref{prob:P1a} are equivalent.
%
%
%
For Problem \ref{prob:P1a}, let 
\begin{align}\label{P4Problem}
\P=\{{\bm \zeta}_i \mathbf{A}_i({\bf x}),\ h_j(\mathbf{x}),\ i = 1,\ldots,m_1,\ j=1,\ldots,m_2\}, 
\end{align}
where $\tilde{\x}=({\bm \zeta}_1,\ldots,{\bm \zeta}_{m_1})$, and let $\T^*={\rm zxocd}(\P)$, then by Lemmas~\ref{lemma:necessary} and \ref{lemma:sufficient}, we obtain the following theorem.

\begin{theorem}\label{theorem:main}
The feasible set of Problem \ref{prob:P1a} (Problem \ref{prob:P1}) is equal to $\mathcal{X}$, which is defined in \eqref{eq:X_adm}.
\end{theorem}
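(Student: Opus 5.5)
The plan is to reduce the theorem to Lemmas~\ref{lemma:necessary} and \ref{lemma:sufficient}. These give $\mathcal{S}=\mathcal{X}$ for the specific polynomial set $\P$ in \eqref{P4Problem}, with $\tilde{\x}=({\bm\zeta}_1,\ldots,{\bm\zeta}_{m_1})$ and $v=\sum_i\alpha_i$. What remains is to show that ${\rm fss}(\P)=\mathcal{S}$ coincides with the feasible set of Problem~\ref{prob:P1a}, namely $\bigl(\bigcap_i\mathcal{X}_{1,i}\bigr)\cap\bigl(\bigcap_j\mathcal{X}_{2,j}\bigr)$. Equivalence with Problem~\ref{prob:P1} is then immediate, because ${\bf A}_i({\bf x})$ has full row rank over $\mathbb{F}_2$ exactly when its left kernel is $\{{\bf 0}\}$.

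First I would check that $\P$ is $\tilde{\x}$-linear in the sense of Definition~\ref{def:x_tilde_linear_poly}. Each entry of ${\bm\zeta}_i{\bf A}_i({\bf x})$ has the form $\sum_k a^{(i)}_{k,l}({\bf x})\zeta_{i,k}$, which is linear in $\tilde{\x}$ with coefficients in ${\bf x}$. The $h_j$ involve only ${\bf x}$. Consequently $({\bf x}^*,{\bf 0})\in\zero(\P)$ holds if and only if $h_j({\bf x}^*)=0$ for all $j$.

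Next comes the set equality.

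\emph{($\supseteq$)} Suppose ${\bf x}^*$ is feasible. Then $({\bf x}^*,{\bf 0})\in\zero(\P)$. Now take any $\tilde{\x}\ne{\bf 0}$ with $({\bf x}^*,\tilde{\x})\in\zero(\P)$. Some block ${\bm\zeta}_i\ne{\bf 0}$ must satisfy ${\bm\zeta}_i{\bf A}_i({\bf x}^*)={\bf 0}$, which contradicts ${\bf x}^*\in\mathcal{X}_{1,i}$. Hence ${\bf x}^*\in\mathcal{S}$.

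\emph{($\subseteq$)} Suppose ${\bf x}^*\in\mathcal{S}$. Then $h_j({\bf x}^*)=0$ by the previous observation. If some ${\bm\zeta}_i\ne{\bf 0}$ lay in the left kernel of ${\bf A}_i({\bf x}^*)$, I would set the $i$-th block to ${\bm\zeta}_i$ and all other blocks to ${\bf 0}$. This gives a nonzero $\tilde{\x}$ with $({\bf x}^*,\tilde{\x})\in\zero(\P)$, contradicting \eqref{eq:set_S}.

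This step is the only real content. The key point is that the blocks are decoupled: each equation involves a single ${\bm\zeta}_i$. That is why "no nonzero $\tilde{\x}$ as a whole" is equivalent to "no nonzero ${\bm\zeta}_i$ for each $i$ separately."

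Finally, I would combine $\mathcal{S}=\mathcal{X}$ from Lemmas~\ref{lemma:necessary} and \ref{lemma:sufficient}, applied with $\T^*={\rm zxocd}(\P)$. I do not expect a serious obstacle. The only point needing care is the block-decoupling argument in the $\subseteq$ direction, together with the tacit requirement that a ${\rm zxocd}$ of $\P$ exists, which the procedure/algorithm is assumed to supply.
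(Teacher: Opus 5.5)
Your proposal is correct and follows the paper's route: the theorem is Lemmas~\ref{lemma:necessary} and \ref{lemma:sufficient} applied to the $\tilde{\x}$-linear set $\P$ of \eqref{P4Problem}, which gives $\mathrm{fss}(\P)=\mathcal{X}$. The paper tacitly identifies $\mathrm{fss}(\P)$ with the feasible set of Problem~\ref{prob:P1a}, using the symbol $\mathcal{S}$ for both, and your block-decoupling argument supplies exactly the justification that identification needs: pad a nonzero left-kernel vector of one ${\bf A}_i({\bf x}^*)$ with zero blocks, and conversely note that any nonzero $\tilde{\x}$ has some nonzero block ${\bm\zeta}_i$.
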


Note that $\P$ in \eqref{P4Problem} is a polynomial set in $(\x,\tilde{\x})$, and $\tilde{\x}$-linear (Definition~\ref{def:x_tilde_linear_poly}). 
At this point, the key problem is how to find the feasible set of Problem~\ref{prob:P1a} by deriving $\mathcal{X}$, that is, how to derive ${\rm zxocd}(\P)$. To solve this problem, we present Algorithm~\ref{BCS-ALG} based on Procedure~\ref{proc:BCS}, which is called the \emph{BCS algorithm}. The BCS algorithm is implemented by only two basic operations (initial decomposition and substitution). Thus, we will prove in Theorem~\ref{theorem:BCS} that the BCS algorithm can derive ${\rm zxocd}(\P)$.



\begin{algorithm}[!ht]
	\small	
    \label{BCS-ALG}
	\caption{BCS Algorithm}
	\KwIn{$\tilde{\bf x}$-linear polynomial set ${\mathcal P}$}
	\KwOut{${\rm zxocd}(\P)={\mathcal T}^* =\{{\mathcal T}_1,\ldots,{\mathcal T}_s\}$}
    ${\mathcal P}^* \leftarrow \{{\mathcal P}\}$, ${\mathcal T}^* \leftarrow \emptyset$\;
	\While{$\P^* \ne \emptyset$}
	{
        Arbitrarily select a branch ${\mathcal Q} \in {\mathcal P}^*$ and let ${\mathcal P}^* \leftarrow {\mathcal P}^* \setminus \{{\mathcal Q}\}$\; 
        ${\mathcal T} \leftarrow \emptyset$;\tcp*[f]{characteristic set}\\
        ${\mathcal Q}^* \leftarrow \emptyset$;\tcp*[f]{newly generated branches}\\  
        \lIf{$1 \in {\mathcal Q}$}{\textbf{goto} Line~\ref{BCS-line-Pstar-update}}
        \While{$({\mathcal Q} \setminus \{0\}) \ne \emptyset$}{
                Let $f$ be a non-constant polynomial chosen from ${\mathcal Q}$, and ${\rm can}(f) = Ix_c+U$\;\nllabel{BCS-line-choose}
                \eIf{$I = 1$}
                {
                    ${\mathcal Q} \leftarrow {\mathcal Q} \setminus \{f\}$\;                     
                }
                {      
                    ${\mathcal Q}_0 \leftarrow ({\mathcal Q}\setminus \{f\}) \cup {\mathcal T} \cup \{I,U\}$\;\nllabel{BCS-line-Q0}
                    ${\mathcal Q}^* \leftarrow {\mathcal Q}^* \cup \{{\mathcal Q}_0\}$;\tcp*[f]{record the branch generated by $I = 0$}\\  
                    ${\mathcal Q} \leftarrow ({\mathcal Q} \setminus \{f\}) \cup \{I+1\}$;\tcp*[f]{continue with the branch generated by $I=1$}\\                        
                }
                ${\mathcal Q} \leftarrow {\rm subs}(x_c=U,{\mathcal Q})$,
                ${\mathcal T} \leftarrow {\rm subs}(x_c=U,{\mathcal T})$,
                ${\mathcal T} \leftarrow {\mathcal T} \cup \{x_c+U\}$\;\nllabel{BCS-line-QTupdate}
                \lIf{$1 \in {\mathcal T}$ or $1 \in {\mathcal Q}$}{$\T \leftarrow \emptyset$;~\textbf{break}}            
        }
        ${\mathcal P}^* \leftarrow {\mathcal P}^* \cup {\mathcal Q}^*$\;\nllabel{BCS-line-Pstar-update}
        \lIf{$\T \ne \emptyset$}{${\mathcal T}^* \leftarrow {\mathcal T}^* \cup \{{\mathcal T} \setminus \{0\}\}$}
	}
\end{algorithm}
    
\begin{proposition}\label{fact1}
In Algorithm \ref{BCS-ALG}, the initial $I$ for any selected polynomial in Line~\ref{BCS-line-choose} depends only on the ${\bf x}$-variables.
\end{proposition}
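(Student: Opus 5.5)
The plan is to prove a stronger invariant by induction over the execution of Algorithm~\ref{BCS-ALG}, and then read off the proposition from it. Call a polynomial \emph{admissible-shaped} if it is either a polynomial in the $\mathbf{x}$-variables only, or of the form $\sum_{j=1}^{v} q_j(\mathbf{x})\,x_{n+j}$ with each $q_j$ a polynomial in $\mathbf{x}$. The second kind is exactly the form of the $f_i$ in Definition~\ref{def:x_tilde_linear_poly}. The invariant is that, at every moment of the algorithm, every polynomial in $\mathcal{Q}$, in $\mathcal{T}$, in every branch stored in $\mathcal{Q}^*$, and in every element of $\mathcal{P}^*$ is admissible-shaped. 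For this count the polynomials $x_c+U\in\mathcal{T}$ as polynomials in their own right. Initially $\mathcal{P}^*=\{\mathcal{P}\}$ and $\mathcal{P}$ is $\tilde{\mathbf{x}}$-linear, so the base case holds.

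\textbf{The key observation (the proposition itself).} Let $f$ be admissible-shaped and non-constant, and let $\mathrm{can}(f)=Ix_c+U$.
\begin{itemize}
\item If $f$ involves no $\tilde{\mathbf{x}}$-variable, then $I$ and $U$ lie in $\mathbf{x}$.
\item Otherwise $c=\mathrm{cls}(f)=n+k$, where $k$ is the largest index $j$ with $q_j\ne 0$. Then $I=q_k(\mathbf{x})$, and $U=\sum_{j<k}q_j(\mathbf{x})x_{n+j}$ is again admissible-shaped.
\end{itemize}
In both cases $I$ depends only on $\mathbf{x}$. So once the invariant is established, the proposition follows immediately for the polynomial chosen in Line~\ref{BCS-line-choose}.

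\textbf{Preservation of the invariant.} Each operation of the algorithm must be checked.
\begin{itemize}
\item Removing $f$, or removing constants, is harmless.
\item The added polynomials $I$ and $I+1$ lie in $\mathbf{x}$.
\item $U$ is admissible-shaped by the key observation.
\item The new element $x_c+U$ of $\mathcal{T}$ is admissible-shaped. If $c\le n$, everything lies in $\mathbf{x}$. If $c=n+k$, then $x_{n+k}+\sum_{j<k}q_jx_{n+j}$ is a linear form in $\tilde{\mathbf{x}}$ with $\mathbf{x}$-coefficients.
\item $\mathcal{Q}_0$ in Line~\ref{BCS-line-Q0} is a union of admissible-shaped sets.
\end{itemize}

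\textbf{The main step: substitution $\mathrm{subs}(x_c=U,\cdot)$ in Line~\ref{BCS-line-QTupdate}.}
\begin{itemize}
\item If $c\le n$, then $U$ is a polynomial in $\mathbf{x}$. Substituting it into an $\mathbf{x}$-only polynomial gives an $\mathbf{x}$-only polynomial. Substituting it into $\sum_j q_jx_{n+j}$ only changes the coefficients $q_j$ into other $\mathbf{x}$-polynomials. Boolean reduction $x_i^2=x_i$ acts only on $\mathbf{x}$-variables here.
\item If $c=n+k$, then $U$ is a linear form in $\tilde{\mathbf{x}}$ with $\mathbf{x}$-coefficients. Substituting it into $\mathbf{x}$-only polynomials does nothing. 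Substituting it into $\sum_j q_jx_{n+j}$ gives $\sum_{j\ne k}q_jx_{n+j}+q_k\sum_{j<k}q'_jx_{n+j}$, which is again linear in $\tilde{\mathbf{x}}$ with $\mathbf{x}$-coefficients. Crucially, no product of two $\tilde{\mathbf{x}}$-variables can arise, because every polynomial involved is of degree at most one in $\tilde{\mathbf{x}}$.
\end{itemize}
This is the step that needs care, and the degree-one-in-$\tilde{\mathbf{x}}$ argument is what settles it.

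\textbf{Conclusion.} Since Line~\ref{BCS-line-Pstar-update} only moves admissible-shaped branches into $\mathcal{P}^*$, induction over the steps of the algorithm gives the invariant. The key observation then yields that the initial $I$ of every polynomial selected in Line~\ref{BCS-line-choose} depends only on the $\mathbf{x}$-variables.
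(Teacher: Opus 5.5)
Your proof is correct and follows essentially the same route as the paper's: both maintain the invariant that every polynomial has degree at most one in the $\tilde{\mathbf{x}}$-variables (no product of two $\tilde{\mathbf{x}}$-variables), which holds for the $\tilde{\mathbf{x}}$-linear input and is preserved because substitution $x_c=U$ only introduces lower-indexed variables. Your version is more careful than the paper's terse sketch: you check every operation, and you explicitly treat the case where a $\tilde{\mathbf{x}}$-variable is replaced by a $\tilde{\mathbf{x}}$-linear form with $\mathbf{x}$-coefficients, which the paper covers only implicitly.
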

\begin{IEEEproof}
    An initial $I$ that depends on $\tilde{\bf x}$-variables may be chosen only if there exists a term $x_{n+i} x_{n+j} f$ for $i \ne j$ in some polynomial, where $f$ can be any polynomial in $x_1,\ldots,x_{n+v}$. Note that there is no such a term in the polynomial set $\mathcal P$. In the BCS algorithm, the only operation that can generate new (non-constant) terms is the substitution. When substituting $x_c = U$, $U$ is a polynomial in $x_1,\ldots,x_{c-1}$. It follows that no variable $x_i$, $i \le n$, can be replaced by a polynomial involving any $x_{n+j}$, $j \ge 1$. Consequently, no terms satisfying the form $x_{n+i} x_{n+j} f$ can arise in any polynomial involved in the BCS algorithm. This guarantees that any initial $I$ chosen in the BCS algorithm depends only on the ${\bf x}$-variables.
\end{IEEEproof}

\begin{theorem}\label{theorem:BCS}
    Algorithm~\ref{BCS-ALG} can derive ${\rm zxocd}(\P)$.
\end{theorem}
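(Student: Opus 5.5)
We need to prove three things about the output $\T^*$ of Algorithm~\ref{BCS-ALG}: (a) $\zero(\P)=\bigcup_{\T\in\T^*}\zero(\T)$ with pairwise disjoint zero sets; (b) every $\T\in\T^*$ is a (monic) characteristic set in the sense of Definition~\ref{def:CS}; and (c) the projections $\proj_{\x}(\zero(\T))$ are pairwise disjoint (\eqref{eq:projection_orthogonal}). Termination comes first. Each pass of the inner loop removes the leading variable $x_c$ from $\mathcal{Q}$ by substitution, since $U$ and $I+1$ do not involve $x_c$. Hence the inner loop runs at most $n+v$ times. Each branch $\mathcal{Q}_0$ pushed onto $\P^*$ contains $I$, and $I$ has a lower class, so on a branch the multiset of classes decreases in a well-founded ordering. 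Therefore the outer loop terminates as well.

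\textbf{Step (a).} For the inner loop I would keep the invariant
\[
\zero(\text{current branch}) = \zero(\mathcal{Q}\cup\T)\ \cup\ \textstyle\bigcup_{\mathcal{Q}'\in\mathcal{Q}^*}\zero(\mathcal{Q}')
\]
with all pieces pairwise disjoint. This is exactly \eqref{eq:proc_eq_3}--\eqref{eq:proc_eq_4} of Procedure~\ref{proc:BCS}, and it follows from Lemma~\ref{T1T2} together with the fact that substituting $x_c=U$ in the presence of $x_c+U$ does not change the zero set. The case $I=1$ is the trivial instance with an empty $\mathcal{Q}_0$. When $1\in\mathcal{Q}\cup\T$ the branch has no zeros and discarding it is harmless. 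An induction over the outer loop then gives (a).

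\textbf{Step (b).} Every element added to $\T$ is $x_c+U$ with initial $1$. Because later substitutions are applied to $\T$, $x_c$ never reappears, and subsequent substitutions eliminate later leading variables from the tails of earlier elements. The tails therefore involve only free variables, and after sorting by class the leading variables are distinct. So $\T\setminus\{0\}$ is a monic characteristic set.

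\textbf{Step (c).} This is the main obstacle, and it is where Proposition~\ref{fact1} and $\tilde{\x}$-linearity enter. The key point is that every branching test, that is, the choice $I=0$ versus $I+1=0$, is a condition on the $\x$-variables alone. I would prove, by induction along the branching tree, that each branch is described by a set of $\x$-only ``guard'' conditions, namely the initials fixed to $0$ or $1$, together with polynomials that remain $\tilde{\x}$-linear, where substitution of $\x$-variables preserves this form. Take two distinct leaves $\T_i,\T_j$. At the first node where their paths diverge, one leaf satisfies $I=0$ and the other satisfies $I=1$ for the same $I(\x)$. That node and its earlier guards are expressed in the variables present at the time, and the later substitutions $x_c=U$ with $c\le n$ only use $\x$-variables. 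Consequently, for every point of $\zero(\T_i)$ its $\x$-part satisfies $I(\x)=0$ on the transformed guard, and for every point of $\zero(\T_j)$ its $\x$-part satisfies $I(\x)=1$. The $\x$-projections are therefore disjoint.

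The delicate part of step (c) is to make precise that the guard $I$, which is a polynomial in $\x$ possibly rewritten by later substitutions of $\x$-variables, still evaluates the same way on the final zeros. I would handle this by keeping each guard in its original variables: every zero of a leaf is a zero of the branch at that node, because the zero sets are nested by (a). Evaluating the original $I$ at that zero then gives $I(\x)$, which depends only on $\x$ by Proposition~\ref{fact1}. That closes the argument.
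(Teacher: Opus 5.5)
Your arguments for (a) and (c) follow the paper's own proof. The paper also takes zero-orthogonality from Procedure~\ref{proc:BCS}, then separates any two branches by a common initial $I'$ with one branch inside $\zero(I')$ and the other inside $\zero(I'+1)$, where $I'$ is $\x$-only by Proposition~\ref{fact1}. Your evaluation of the original guard at leaf zeros via nested zero sets, and your explicit check (b), are correct and somewhat more careful than the paper's high-level sketch.

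\textbf{The termination argument fails.} The paper does not argue termination at all, and your argument for it does not hold. The class multiset of $\mathcal{Q}_0=(\mathcal{Q}\setminus\{f\})\cup\T\cup\{I,U\}$ need not be smaller than that of the branch it was spawned from. Besides $I$ and $U$, the set $\mathcal{Q}_0$ inherits $\T$, whose elements keep the classes of the polynomials already eliminated. It also inherits the guards $I_j+1$ from earlier splits in the same inner loop, and the removed $f$ need not dominate those guards.

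A concrete counterexample: take $n=8$, $v=2$ and $\P=\{(x_4+1)x_{10}+x_1x_9,\ x_1x_2+x_1\}$, which is $\tilde{\x}$-linear. Split first on the class-$10$ polynomial, then on $x_1x_2+x_1$, in either order relative to the monic guard $x_4$. This pushes a branch equal to $\{x_4,\ x_{10}+x_1x_9,\ x_1\}$. Its class multiset $\{10,4,1\}$ exceeds the parent's $\{10,2\}$ in the multiset order.

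\textbf{A measure that works.} Use the class multiset of the non-triangular part only. That is, omit the monic polynomials $x_c+U$ moved into $\T$ in the current or an ancestor branch. These polynomials have the following properties:
\begin{itemize}
\item Their leading variables occur nowhere else.
\item They stay monic with the same leading variable under all later substitutions.
\item They never spawn branches.
\end{itemize}
Relative to the parent's non-triangular part, the classes of surviving polynomials can only drop under substitution. Each newly added polynomial ($I$, $U$, or a guard $I_j+1$) has class strictly below that of a polynomial that left the non-triangular part ($f$ or an eliminated $f_j$). So this multiset decreases strictly in the Dershowitz--Manna order along every edge of the finitely branching tree, and K\"onig's lemma gives termination.
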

\begin{IEEEproof}
Based on Procedure~\ref{proc:BCS}, the output of Algorithm \ref{BCS-ALG}, $\T^*$, is a zero orthogonal characteristic decomposition of $\mathcal{P}$, i.e., ${\rm zocd}(\P)=\T^*=\{\T_1,\ldots,\T_s\}$. Then, we only need to prove that $\T^*$ satisfies~\eqref{eq:projection_orthogonal}.
We will prove this theorem from a high-level perspective of the BCS algorithm, considering only those steps that affect the zero sets. A proof that strictly follows each step of the BCS algorithm would be cumbersome. 
    
Let $\kappa$ be the index for the iterations in the BCS algorithm, where we regard every round of the second while-loop in Algorithm~\ref{BCS-ALG} as an iteration. The BCS algorithm decomposes a polynomial set $\P$ into a set of branches ${\mathcal B}_\kappa^*$ at every iteration~$\kappa$ such that $\bigcup_{{\mathcal B} \in {\mathcal B}_\kappa^*} \zero({\mathcal B}) = \zero(\P)$. At iteration~$\kappa+1$, the new set of branches ${\mathcal B}_{\kappa+1}^*$ is generated by the following steps: 
\begin{enumerate}
    \item Choose a branch ${\mathcal B} \in {\mathcal B}_\kappa^*$ and let ${\mathcal B}_{\kappa+1}^* \leftarrow {\mathcal B}_\kappa^* \setminus \{\mathcal B\}$.
    \item Generate two branches ${\mathcal B}_1 \leftarrow {\mathcal B} \cup \{I\}$ and ${\mathcal B}_2 \leftarrow {\mathcal B} \cup \{I + 1\}$, where $I$ is the initial of the chosen polynomial. (Strictly speaking, $\mathcal{Q}_0$ updated by Line~\ref{BCS-line-Q0} of Algorithm~\ref{BCS-ALG} is the branch ${\mathcal B}_1$, and $\mathcal{Q} \cup \mathcal{T}$ updated by Line~\ref{BCS-line-QTupdate} of Algorithm~\ref{BCS-ALG} is the branch $\mathcal{B}_2$.)
    \item ${\mathcal B}_{\kappa+1}^* \leftarrow {\mathcal B}_{\kappa+1}^* \cup \{{\mathcal B}_1\} \cup \{{\mathcal B}_2\}$.
\end{enumerate}

This way, for any two branches ${\mathcal B},{\mathcal B}' \in {\mathcal B}_\kappa^*$, there exists an initial $I'$ such that $\zero({\mathcal B}) \subset \zero(I')$ and $\zero({\mathcal B}') \subset \zero(I'+1)$. By Proposition \ref{fact1}, this initial $I'$ depends only on the ${\bf x}$-variables. Thus, $\proj_{\bf x}(\zero({\mathcal B})) \cap \proj_{\bf x}(\zero({\mathcal B}')) = \emptyset$ for any distinct ${\mathcal B},{\mathcal B}' \in {\mathcal B}_\kappa^*$.

Let $\kappa_{\rm last}$ be the index for the last iteration. Since every $\T \in \T^*$ is obtained, via reduction, from a distinct branch in ${\mathcal B}_{\kappa_{\rm last}}^*$, the proof of this theorem is complete.
\end{IEEEproof}

\begin{algorithm}[!ht]\label{alg:BCSFR}
	\small
	\caption{BCSFR}
	\KwIn{$\tilde{\bf x}$-linear polynomial set ${\mathcal P}$}
	\KwOut{${\mathcal T}^* =\{{\mathcal T}_1,\ldots,{\mathcal T}_s\}$ such that ${\rm fss}(\P) = \zero(\T^*)$, where $\T_i$ is a characteristic set in $\bf x$}
    ${\mathcal P}^* \leftarrow \{{\mathcal P}\}$,
    ${\mathcal T}^* \leftarrow \emptyset$\;
	\While{$\P^* \ne \emptyset$}
	{
        Arbitrarily select a branch ${\mathcal Q} \in {\mathcal P}^*$ and let ${\mathcal P}^* \leftarrow {\mathcal P}^* \setminus \{{\mathcal Q}\}$\;
        ${\mathcal T} \leftarrow \emptyset$\;      
        \lIf{$1 \in \mathcal{Q}$}{\textbf{goto} Line~\ref{line:last}}
        \While{$({\mathcal Q} \setminus \{0\}) \ne \emptyset$}{ 
                \While{$|\{f = Ix_c+U \in \mathcal{Q}: I = 1\}|>0$} {
                $f \leftarrow \texttt{Choose}(\{f = Ix_c+U \in \mathcal{Q}: I = 1\},0)$ and let ${\rm can}(f) = Ix_c+U$\;
                $\mathcal{Q} \leftarrow{{\rm subs}(x_c = U,\mathcal{Q}\backslash f)}, \T \leftarrow{{\rm subs}(x_c = U,\T)}, \T \leftarrow \T \cup\{f\}$\;
                \If{$1 \in {\mathcal T}$ or $1 \in {\mathcal Q}$ or $x_k+1 \in {\mathcal T}$ for some $k \ge n+1$}
                {$\T \leftarrow \emptyset$;~\textbf{goto} Line~\ref{line:last}\;}
                }
                \lIf{$({\mathcal Q} \setminus \{0\}) = \emptyset$}{\textbf{goto} Line~\ref{line:last}}
                ${\mathcal Q} \leftarrow {\mathcal Q} \setminus \{0\}$\;
                \eIf{$ |\{f \in \mathcal{Q}:{\rm cls}(f) \le n\}|>0$} {$f \leftarrow \texttt{Choose}(\{f \in \mathcal{Q}:{\rm cls}(f) \le n\},1)$\;}
                {$k^* \leftarrow{n+v}$\;
                \While{$k^* > n$}{
                    \lIf{$|\{f\in \T: {\rm cls}(f) = k^*\}| > 0$}{$k^* \leftarrow{k^* - 1}$}
                    \lElseIf{$|\{f\in \mathcal{Q}:{\rm cls}(f) = k^*\}|>0$}{$f \leftarrow \texttt{Choose}({\mathcal Q},2)$; \textbf{break}}
                    \lElse{$\T\leftarrow \emptyset$;
                    \textbf{goto} Line~\ref{line:last}}
                }}   
                Let ${\rm can}(f) = Ix_c + U$\;
                \eIf{$U = 1$ and $I = \prod_{i=1}^{j} x_{k_i}$ for some $j$ and some $\{k_1,\ldots,k_j\}$}{
                    ${\mathcal Q} \leftarrow {\mathcal Q} \setminus \{f\}$\; 
                    \For{$i=1,\ldots,j$}
                    {
                        ${\mathcal Q} \leftarrow {\rm subs}(x_{k_i}=1,{\mathcal Q})$,
                        ${\mathcal T} \leftarrow {\rm subs}(x_{k_i}=1,{\mathcal T})$,
                        ${\mathcal T} \leftarrow {\mathcal T} \cup \{x_{k_i}+1\}$\;
                    }
                }
                {      
                    \eIf{$I = 1$}{
                    ${\mathcal Q} \leftarrow {\mathcal Q} \setminus \{f\}$\;
                    }
                    {
                    \lIf{$U \ne 1$}{
                    $\P^* \leftarrow \P^* \cup \{({\mathcal Q}\setminus \{f\}) \cup {\mathcal T} \cup \{I,U\}\}$
                    }  
                    ${\mathcal Q} \leftarrow ({\mathcal Q} \setminus \{f\}) \cup \{I+1\}$\;
                    }      
                }
                ${\mathcal Q} \leftarrow {\rm subs}(x_c=U,{\mathcal Q})$,
                ${\mathcal T} \leftarrow {\rm subs}(x_c=U,{\mathcal T})$,
                ${\mathcal T} \leftarrow {\mathcal T} \cup \{x_c+U\}$\;
                \If{$1 \in {\mathcal T}$ or $1 \in {\mathcal Q}$ or $x_k+1 \in {\mathcal T}$ for some $k \ge n+1$}
                {$\T \leftarrow \emptyset$;~\textbf{break}\;}            
        }
        \lIf{$\mathcal T$ is admissible\nllabel{line:last}}{${\mathcal T}^* \leftarrow {\mathcal T}^* \cup \{{\rm Trunc}_{\bf x}({\mathcal T} \setminus \{0\})\}$;\tcp*[f]{Definition~\ref{def:trunc}}}
	}
\end{algorithm}

\subsection{BCSFR Algorithm}\label{subsec:alg}
According to Definition~\ref{def:admissiblezero}, only admissible characteristic sets are required. Consequently, to enhance algorithmic efficiency, it is intuitive to identify and prune the branches incapable of producing admissible characteristic sets at an early stage of the algorithm. To this end, we develop the BCSFR algorithm, presented in Algorithm~\ref{alg:BCSFR}. 
The function $\texttt{choose}$ used in Algorithm~\ref{alg:BCSFR} to choose a \emph{non-constant polynomial} is defined below.

\Fn{\Choose{$\mathcal{Q},i$}}{
        \lIf{$i=1$}{
            \Return{$f \in (\mathcal{Q} \setminus \{0,1\})$ chosen by 1st-CSO}
        }
        \lElseIf{$i=2$}{
            \Return{$f \in (\mathcal{Q} \setminus \{0,1\})$ chosen by 2nd-CSO}
        }
        \lElse{
            \Return{$f \in (\mathcal{Q} \setminus \{0,1\})$ chosen randomly}
        }
    }

\begin{remark}
    Note that in Algorithm~\ref{alg:BCSFR}, we also make some modifications to improve the computational efficiency of the algorithm. For example, we speed up the initial decomposition for the polynomials of special forms. Consider $f = x_1x_2x_3 + 1$. We can immediately know the final decomposition result of $f$ is $\{x_1+1,x_2+1,x_3+1\}$, which means that we do not need to perform initial decompositions step by step for such polynomials. These modifications are general for CS-based algorithms and are not dedicated to the considered linear coding problems.
\end{remark}

An example is given in Appendix~\ref{appendix:BCSFR_example} to elaborate on the improvements in Algorithm~\ref{alg:BCSFR}.

Considering the special structure of Problem~\ref{prob:P1}, where $m_1$ full-rank constraints are required to hold simultaneously, we can use the BCSFR algorithm in an incremental way to address these full-rank constraints progressively. To be more precise, we can form $m_1$ $\tilde\x$-linear polynomial sets $\P_1,\ldots,\P_{m_1}$ respectively according to the $m_1$ full-rank constraints (the non-rank constraints can be arranged to any $\P_i$). We first use the BCSFR to solve $\P_1$ and obtain a collection $\mathcal{T}_1^*$ of characteristic sets. Then we use the BCSFR algorithm $|\mathcal{T}_1^*|$ times, each solving the union of a characteristic set in $\mathcal{T}_1^*$ and $\P_2$, and recording the results in $\mathcal{T}_2^*$. The processing of $\P_3,\ldots,\P_{m_1}$ are the same as $\P_2$. Finally, we obtain $\mathcal{T}_{m_1}^*$ as the result. In this way, the problem size handled at each invocation is significantly reduced.

We present the Incremental BCSFR (Inc-BCSFR) in Algorithm~\ref{alg:R-BCSFR}. Here, we provide a brief explanation for its correctness. The feasible set of Problem~\ref{prob:P1} can be written as $\bigcap_{i = 1}^{m_1} {\rm fss}(\P_i)$. After the $i$-th for-loop of the Inc-BCSFR algorithm (line 3), we obtain $\bigcap_{j = 1}^{i} {\rm fss}(\P_j) = \zero(\T_i^*)$. Thus, we can obtain the feasible set of Problem~\ref{prob:P1} after running all the for-loops. 

\begin{algorithm}[h]\label{alg:R-BCSFR}
	\caption{Incremental BCSFR (Inc-BCSFR)}
	\KwIn{$\tilde{\bf x}$-linear polynomial sets ${\mathcal P}_1,\ldots,{\mathcal P}_{m_1}$}
	\KwOut{${\mathcal T}^{*} =\{{\mathcal T}_1,\ldots,{\mathcal T}_s\}$ such that $\bigcap_{i=1}^{m_1} {\rm fss}(\P_i) = \zero(\T^*)$, where $\T_i$ is a characteristic set in $\bf x$}
    ${\mathcal T}_0^* \leftarrow \{0\}$\;
    ${\mathcal T}_1^* \leftarrow \emptyset,\ldots,{\mathcal T}_{m_1}^* \leftarrow \emptyset$\;
    \For{$i=1,\ldots,m_1$}
    {
        \ForEach{${\mathcal T} \in {\mathcal T}_{i-1}^*$}
        {
            Let $\mathcal{F}^*$ be the output of the BCSFR algorithm for solving $\P_i \cup {\mathcal T} $\;
            ${\mathcal T}_i^* \leftarrow {\mathcal T}_i^* \cup {\mathcal F}^*$\;
        }
    }
    ${\mathcal T}^* \leftarrow {\mathcal T}_{m_1}^*$\;
\end{algorithm}

\section{Application to Linear Network Codes}
\label{sec-appl-nc}
In this section, we show the application of the proposed algorithm to network communication scenarios employing LNCs. We first present the basic mathematical models of networks and LNCs. Interested readers are referred to~\cite{yeung2008information} for further details on network coding. Then, we formulate an LNC problem with certain practical constraints, which satisfies the form of Problem~\ref{prob:P1}. Lastly, we provide an illustrative example to show our solution.

\subsection{Fundamentals of Linear Network Codes}
We consider a (communication) network model consisting of the following components:
\begin{enumerate}
	\item  Directed acyclic graph (DAG) $\mathcal{G}$: A network is represented by a DAG $\mathcal{G} = (\mathcal{V},\mathcal{E})$, where $\mathcal{V}$ is the set of nodes and $\mathcal{E}$ is the set of edges, which represent point-to-point channels.
	\item Source node $s$: Let node $s \in \mathcal{V}$ be the unique source that generates a message consisting of $\omega$ symbols in $\mathbb{F}_2$, represented by a row vector $\mathbf{m} \in \mathbb{F}_2^{\omega}$.\footnote{In real-world applications, the source node generates $\omega$ \emph{packets} rather than \emph{symbols}. Each packet can be viewed as a column $T$-vector consisting of $T$ bits. In general, $T$ is a large number, say $8192$. However, this difference does not affect the mathematical analysis in this paper. For simplicity, we adopt the assumption that each packet has only one symbol.}
	\item User node set $\mathcal{U}$: Let $\mathcal{U} \subset \mathcal{V}$ be the set of user nodes. Each user node $u \in \mathcal{U}$ is required to recover all the symbols generated by the source node,~i.e., the $\omega$-vector $\mathbf{m}$.
\end{enumerate}
We refer to the triple $({\mathcal G},s,{\mathcal U})$ as a network.

In the network $({\mathcal G},s,{\mathcal U})$, parallel edges between a pair of nodes is allowed. We assume without loss of generality that all the edges in the network have unit capacity, i.e., one symbol in $\mathbb{F}_2$ can be transmitted on each edge by one use of the edge. For node~$t \in{\mathcal V}$, denote its sets of incoming and outgoing edges by ${\rm In}(t)$ and ${\rm Out}(t)$, respectively. We use $(v,t)$ to represent the edge from node~$v$ to node~$t$ if there exists only one edge between them.

For source node $s$, assume that the $\omega$ source symbols are injected by $\omega$ \textit{imaginary} incoming edges, and let ${\mathcal E}_{0}$ denote the set of the imaginary edges. We define the augmented edge set $\tilde{\mathcal E}= {\mathcal E}\cup{\mathcal E}_0$, and the augmented graph $\tilde{\mathcal G}=({\mathcal V},\tilde{\mathcal E})$. Unless stated otherwise, ${\rm In}(t)$ and ${\rm Out}(t)$ are taken with respect to the augmented graph $\tilde{\mathcal G}$.

Let ${\mathcal E}_{\rm pair}$ be the set of adjacent edge pairs,
\begin{equation}\label{eq:E_pair}
{\mathcal E}_{\rm pair} = \bigcup_{t\in{\mathcal V}} {\rm In}(t)\times{\rm Out}(t).
\end{equation}
\begin{define}[Linear network code]\label{def:LNC}
An LNC on $\tilde{\mathcal G}$ over $\mathbb{F}_2$ is an assignment of encoding coefficients,
\begin{equation}\nonumber
{\mathscr C} = \big(g_{e_0,e_1}\big)_{(e_0,e_1)\in{{\mathcal E}_{\rm pair}}},\quad g_{e_0,e_1}\in\mathbb{F}_2.
\end{equation}
For each node $t \in{\mathcal V}$, let
\begin{equation}\nonumber
{\bf G}_t = [g_{e_0,e_1}]_{e_0 \in {\rm In}(t),\, e_1 \in {\rm Out}(t)}
\in \mathbb{F}_2^{|{\rm In}(t)| \times |{\rm Out}(t)|},
\end{equation}
which is called the local encoding matrix for node $t$. Equivalently, 
\begin{equation}\nonumber
    {\mathscr C} = ({\bf G}_{t})_{t\in{\mathcal{V}}}.
\end{equation}
\end{define}

\begin{remark}
    Note that in (\ref{eq:E_pair}), ${\rm In}(t)\times{\rm Out}(t)= \emptyset$ if ${\rm In}(t) = \emptyset$ or ${\rm Out}(t) = \emptyset$.
\end{remark}

\begin{remark}
    The matrix structure of ${\bf G}_t$ implicitly assumes an ordering among the incoming and outgoing edges of node $t$.
\end{remark}

An \emph{admissible} LNC $\mathscr C$ should make every user node be able to recover \emph{all} the source symbols. Before formally defining admissible LNCs, we need to formally define the observations available at the user nodes. 

Given a code ${\mathscr C}$, define the \emph{global encoding vector} $\mathbf{f}_e\in\mathbb{F}_2^{\omega}$ for every edge $e\in\tilde{\mathcal E}$ as a column $\omega$-vector satisfying:
\begin{enumerate}
	\item (\textit{Imaginary edges}) For the source node $s$ and its $i$-th imaginary incoming edge $e \in{\mathcal E}_{0}$,
	\begin{equation}\nonumber
	\mathbf{f}_{e} = \mathbf{e}_{i},
	\end{equation}
    where ${\bf e}_i$ is a zero column $\omega$-vector except that the $i$-th position is one.
	\item (\emph{Real edges}) For any real edge $e\in{\rm Out}(t)\cap{\mathcal E}$,
	\begin{equation}\label{eq:genc_kernel}
	\mathbf{f}_e
	= \sum_{e_0\in{\rm In}(t)} g_{e_0,e} \mathbf{f}_{e_0}.
	\end{equation}
\end{enumerate}

\begin{remark}
    In the recursive computation of $\mathbf{f}_{e}$, an order of edges is implied. When computing $\mathbf{f}_{e}$, $e\in{\rm Out}(t)$, the global encoding vectors $\mathbf{f}_{e_0}$, $e_0\in{\rm In}(t)$ should be computed first. Such orders exist because ${\mathcal G}$ is acyclic.
\end{remark}

Note that $\{\mathbf{e}_1,\ldots,\mathbf{e}_\omega\}$ form the standard basis of $\mathbb{F}_2^\omega$. For each $e\in\tilde{\mathcal E}$, the symbol carried by edge $e$, denoted as $y_e \in \mathbb{F}_2$, can be written as
\begin{equation}\nonumber
y_e = \mathbf{m} \mathbf{f}_e,
\end{equation}
where $\mathbf{m} \in \mathbb{F}_2^{\omega}$ is the row vector of all source symbols.

Using the global encoding vectors $\mathbf{f}_{e}$, the observations at user nodes can be formulated. For a user node $u \in{\mathcal U}$, collect the global encoding vectors for the edges in ${\rm In}(u)$ into the matrix,
\begin{equation}\label{eq:receive_matrix}
{\bf F}_u = \big[\mathbf{f}_e \big]_{e\in{\rm In}(u)}
\in \mathbb{F}_2^{\omega\times |{\rm In}(u)|},
\end{equation}
where the columns are ordered according to any fixed ordering of ${\rm In}(u)$.
Let $\mathbf{y}_u = (y_e)_{e\in{\rm In}(u)}$ denote the received row vector at user node $u$. Then,
\begin{equation}\nonumber
\mathbf{y}_u = \mathbf{m} {\bf F}_u.
\end{equation}
Clearly, node $u$ can recover $\bf m$ if and only if ${\bf F}_u$ has full row rank.

\begin{define}[Admissible LNC for $({\mathcal G},s,{\mathcal U})$]\label{def:admissible}
An LNC~${\mathcal C}$ over $\mathbb{F}_2$ is \emph{admissible} for the network $({\mathcal G},s,{\mathcal U})$ if for every user node $u\in{\mathcal U}$,
\begin{equation}\nonumber
{\rm rank}\left({\bf F}_u\right)=\omega.
\end{equation}
\end{define}

\begin{example}
    Consider the network in Fig.~\ref{fig:butterfly}, where the source node $s$ is to transmit $\omega =2$ symbols $b_1$ and $b_2$ to two user nodes $u_1$ and $u_2$. The local encoding matrices are:
    \begin{equation*}
        \begin{aligned}
            &{\bf G}_{s} = \begin{bmatrix}
                1 & 0\\
                0 & 1
            \end{bmatrix},\quad {\bf G}_{t_1} = {\bf G}_{t_2} = \begin{bmatrix}
                1 & 1
            \end{bmatrix},\quad
            {\bf G}_{t_3} = \begin{bmatrix}
                1 \\ 1
            \end{bmatrix},\quad {\bf G}_{t_4} = \begin{bmatrix}
                1 & 1
            \end{bmatrix}.
        \end{aligned}
    \end{equation*}
    Using such local encoding matrices, one can obtain the global encoding vectors, which are shown near the corresponding edges. This way, node $u_1$ receives $b_1$ and $b_1 + b_2$, and node $u_2$ receives $b_2$ and $b_1 + b_2$. Obviously, both nodes $u_1$ and $u_2$ can recover $b_1$ and $b_2$.
\end{example}

\begin{figure}[!t]
	\centering
	\includegraphics[width=3in]{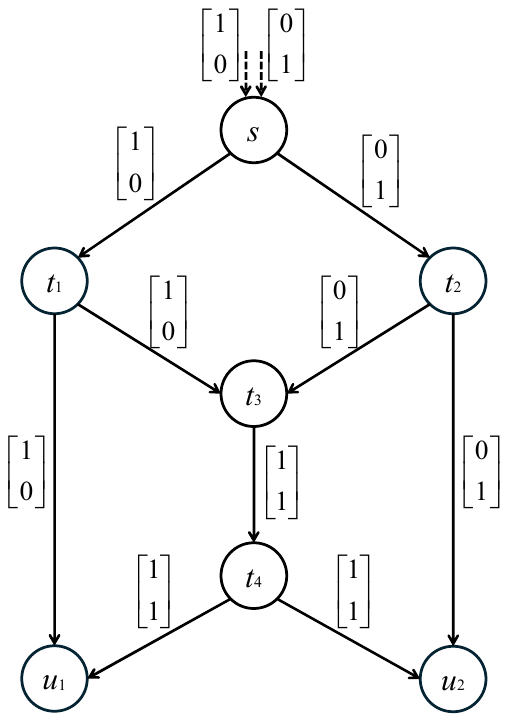}
	\caption{An example of an admissible LNC, where $\omega = 2$ and $\mathcal{U} = \{u_1,u_2\}$. The global encoding vectors are shown in the figure.}
	\label{fig:butterfly}
\end{figure}

\subsection{Problem Formulation}\label{subsec:appl_nc_formulation}
Let $\hat{\mathcal{V}} \subset \mathcal{V}$ be the set of nodes for which the local encoding matrices need to be designed, and let
\begin{equation*}
    \hat{\mathcal E}_{\rm pair} = \bigcup_{t\in\hat{\mathcal V}} {\rm In}(t)\times{\rm Out}(t).
\end{equation*}
For $t \in \mathcal{V}\setminus \hat{\mathcal{V}}$, its local encoding matrix is a constant matrix. Such nodes are referred to as \emph{constant nodes}.\footnote{The introduction of constant nodes is often attributed to the following reasons: 1) Some nodes may possess no outgoing edges, resulting in their local encoding matrices being empty matrices. 2) Some local encoding matrices are imposed to be fixed due to specific constraints or to reduce the number of variables in the optimization problem.} For every node $t \in \hat{\mathcal{V}}$, its local encoding matrix ${\bf G}_t$ has $|{\rm In}(t)| \times |{\rm Out}(t)|$ encoding coefficients which need to be optimized. In most literature on LNCs, all encoding coefficients can be independently designed. However, certain constraints may be desirable due to practical considerations. These constraints can be included into the non-rank constraints in Problem~\ref{prob:P1}. Except for constant nodes, we consider three types of nodes to be optimized:\footnote{Besides the constraints we consider, it is straightforward to include other types of constraints into the problem.}
\begin{enumerate}
    \item \emph{General node:} Such a node $t$ can perform linear combinations of the received symbols and no specific constraint is imposed to ${\bf G}_t$.
    \item \emph{Routing node:} Such a node $t$ can only perform routing, i.e., choose one of the received symbols carried by the incoming edges and forward it on an outgoing edge. In other words, the Hamming weight of every column in ${\bf G}_t$ is exactly $1$.
    \item \emph{Broadcast node:} Such a node $t$ can only broadcast symbols on its outgoing edges, i.e., transmit the same symbol on its outgoing edges. In other words, the columns in ${\bf G}_t$ are the same.
\end{enumerate}

We now formulate an \emph{LNC problem} in the form of Problem~\ref{prob:P1}. Every encoding coefficient $g_{e_0,e_1}$, $(e_0,e_1)\in\hat{\mathcal{E}}_{\rm pair}$, is represented by a binary variable $x_i$. Note that $i \in \{1,\ldots,n\}$, where $n = \sum_{t \in \hat{\mathcal{V}}} |{\rm In}(t)| \times |{\rm Out}(t)|$.\footnote{To avoid introducing additional notation, we keep using $n$ for the number of variables. The value of $n$ depends on the particular problem under consideration.} Let $\varphi: \hat{\mathcal{E}}_{\rm pair} \to \{1,\ldots,n\}$ be a bijection. Let ${\bf F}_u({\bf x})$ denote the matrix defined by (\ref{eq:receive_matrix}) with respect to ${\bf x} = (x_1,\ldots,x_n)$. Let $\mathcal{V}_{\rm g}$, $\mathcal{V}_{\rm b}$, and $\mathcal{V}_{\rm r}$ be the sets of general nodes, broadcast nodes, and routing nodes, respectively. Note that $\hat{\mathcal{V}} = \mathcal{V}_{\rm g} \cup \mathcal{V}_{\rm b} \cup \mathcal{V}_{\rm r}$. We have the following constraints for the LNC problem:
\begin{enumerate}
    \item \emph{Admissibility constraints: } For $u \in \mathcal{U}$,
    \begin{equation}\label{eq:rank_constraint_LNC}
        {\rm rank}({\bf F}_u({\bf x})) = \omega.
    \end{equation}
    \item \emph{Routing constraints:} For $t \in {\mathcal{V}_{\rm r}},~e \in {\rm Out}(t)$,
    \begin{equation}\label{eq:routing_constraint_LNC}
        \begin{aligned}
            &x_{\varphi(i,e)} x_{\varphi(j,e)} = 0,\quad i,j \in {\rm In}(t),~i\ne j,\\
            & \sum_{i \in {\rm In}(t)} x_{\varphi(i,e)} + 1 = 0.
        \end{aligned}
    \end{equation}
    \item \emph{Broadcast constraints:} For $t \in {\mathcal{V}_{\rm b}},~e \in {\rm In}(t)$,
    \begin{equation}\label{eq:broadcast_constraint_LNC}
        x_{\varphi(e,i)} = x_{\varphi(e,j)},\quad i,j \in {\rm Out}(t),~i\ne j.
    \end{equation}
\end{enumerate}

We can rearrange the above constraints into the two types of constraints in Problem~\ref{prob:P1}. The admissibility constraints provided above are consistent with the full-rank constraints in Problem~\ref{prob:P1}. The routing constraints and the broadcast constraints should be classified into the non-rank constraints in Problem~\ref{prob:P1}. Here, we do not delve into the objective function $f({\bf x})$ in Problem~\ref{prob:P1}. A practical objective function will be discussed in Sec.~\ref{sec-experiment}. For such an optimization problem, the feasible set can be explicitly obtained using the BCSFR/Inc-BCSFR algorithm.

If an exhaustive search is used to solve the LNC problem, the size of the search space is
\begin{equation}\label{eq:exhaustive_sparce}
    \prod_{t_1 \in \mathcal{V}_{\rm g}} 2^{|{\rm In}(t_1)|\times|{\rm Out}(t_1)|} \times \prod_{t_2 \in \mathcal{V}_{\rm r}} (|{\rm In}(t_2)|)^{|{\rm Out}(t_2)|} \times \prod_{t_3 \in \mathcal{V}_{\rm b}} 2^{|{\rm In}(t_3)|}.
\end{equation}

\subsection{An Illustrative Example}\label{subsec:example_lnc}
\begin{figure}[!t]
	\centering
	\includegraphics[width=5 in]{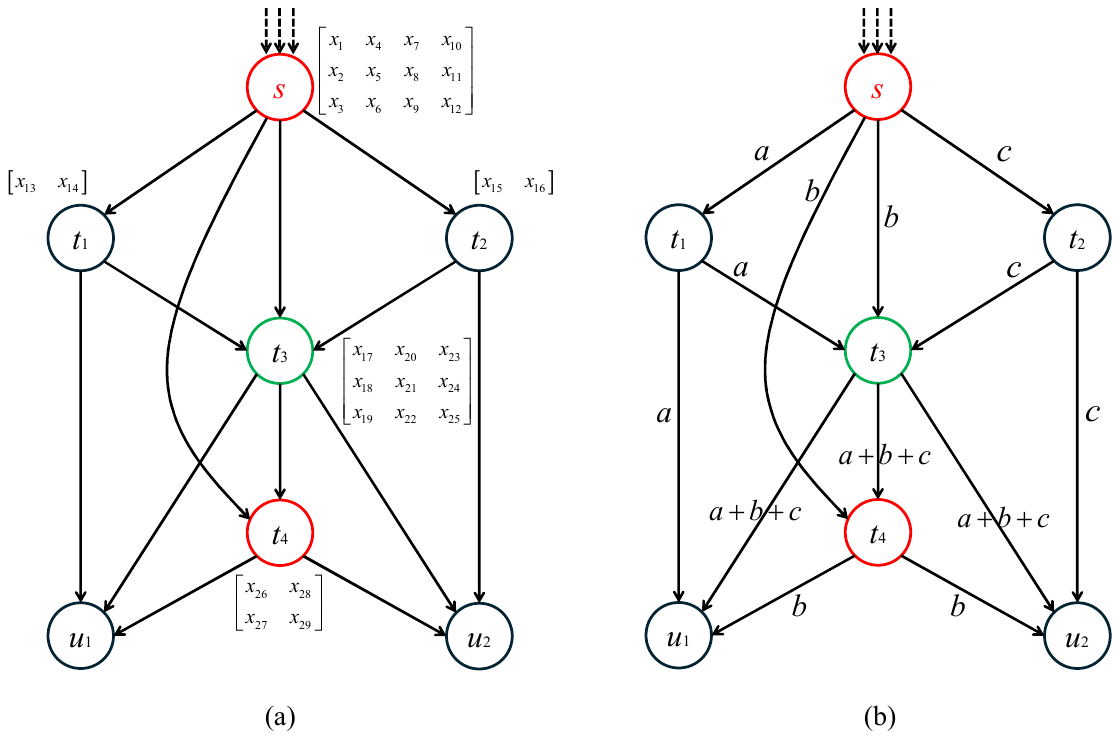}
	\caption{(a) A network $(\mathcal{G},s,\mathcal{U})$, where $\omega = 3$ and $\mathcal{U} = \{u_1,u_2\}$. The nodes highlighted in red and green are routing and broadcast nodes, respectively. The local encoding matrices represented by $x_1,\ldots,x_n$ are given. (b) An admissible LNC for the network in (a), illustrated by the expressions of the symbols carried by the edges.}
	\label{fig:LNC_example}
\end{figure}

Consider the network as shown in~Fig.~\ref{fig:LNC_example}(a), where $\mathcal{U} = \{u_1,u_2\}$, $\mathcal{V}_{\rm g} = \{t_1,t_2\}$, $\mathcal{V}_{\rm r} = \{s,t_4\}$, and $\mathcal{V}_{\rm b} = \{t_3\}$. Three symbols $a$, $b$, and $c$ are generated at the source node $s$, and are to be transmitted to two user nodes $u_1$ and $u_2$. The local encoding kernels are represented by the binary variables ${\bf x} = (x_1,\ldots,x_{29})$, and their explicit expressions are shown in~Fig.~\ref{fig:LNC_example}(a). The LNC problem has the following constraints:
\begin{enumerate}[label=\roman*)]
    \item \emph{Full-rank constraints: }
    \begin{equation}\nonumber
        {\rm rank}\left( \begin{bmatrix}
            {\bf f}_{(t_1,u_1)} & {\bf f}_{(t_3,u_1)} & {\bf f}_{(t_4,u_1)}
        \end{bmatrix} \right) = 3
    \end{equation}
    \begin{equation}\nonumber
        {\rm rank}\left( \begin{bmatrix}
            {\bf f}_{(t_2,u_2)} & {\bf f}_{(t_3,u_2)} & {\bf f}_{(t_4,u_2)}
        \end{bmatrix} \right) = 3
    \end{equation}
    where
    \begin{equation*}
        {\bf f}_{(t_1,u_1)} = \begin{bmatrix}
            x_1 x_{13}\\x_2 x_{13}\\x_3 x_{13}
        \end{bmatrix},\quad
        {\bf f}_{(t_3,u_1)} = \begin{bmatrix}
            x_1 x_{14} x_{17}+ x_{10} x_{15} x_{19} + x_{18}x_{7}\\
            x_2 x_{14} x_{17} + x_{11} x_{15} x_{19} + x_{18} x_8\\
            x_3 x_{14} x_{17} + x_{12} x_{15} x_{19} + x_{18} x_9
        \end{bmatrix},
    \end{equation*}
    \begin{equation*}
        {\bf f}_{(t_4,u_1)} =\begin{bmatrix}
            x_{26} x_4 + x_{27}(x_1 x_{14} x_{20} + x_{10} x_{15} x_{22} + x_{21} x_7)\\
            x_{26} x_5 + x_{27} (x_{11} x_{15} x_{22} + x_{14} x_2 x_{20} + x_{21} x_8)\\
            x_{26} x_6 + x_{27} (x_{12} x_{15} x_{22} + x_{14} x_{20} x_3 + x_{21} x_9)
        \end{bmatrix},
    \end{equation*}
    \begin{equation*}
        {\bf f}_{(t_2,u_2)} = \begin{bmatrix}
           x_{10} x_{16}\\
           x_{11} x_{16}\\
           x_{12} x_{16}
        \end{bmatrix},\quad
        {\bf f}_{(t_3,u_2)} = \begin{bmatrix}
            x_1 x_{14} x_{23} + x_{10} x_{15} x_{25} + x_{24} x_7\\
            x_{11} x_{15} x_{25} + x_{14} x_2 x_{23} + x_{24} x_{8}\\
            x_{12} x_{15} x_{25} + x_{14} x_{23} x_3 + x_{24} x_9
        \end{bmatrix},
    \end{equation*}
    \begin{equation*}
        {\bf f}_{(t_4,u_2)} =\begin{bmatrix}
            x_{28} x_4 + x_{29}(x_1 x_{14} x_{20} + x_{10} x_{15} x_{22} + x_{21} x_7)\\
            x_{28} x_{5} + x_{29}(x_{11} x_{15} x_{22} + x_{14} x_2 x_{20} + x_{21} x_8)\\
            x_{28} x_6 + x_{29}(x_{12} x_{15} x_{22} + x_{14} x_{20} x_3 + x_{21} x_9)
        \end{bmatrix}.
    \end{equation*}
    \item \emph{Non-rank constraints:}
    \begin{equation}\nonumber
    \begin{aligned}
        \text{(Routing:)~}& x_1x_2 = 0,x_1x_3 = 0, x_2x_3 = 0,x_1+x_2+x_3+1 = 0,\\
        & x_4x_5 = 0,x_4x_6 = 0, x_5x_6 = 0,x_4+x_5+x_6+1 = 0,\\
        & x_7x_8 = 0,x_7x_9 = 0, x_8x_9 = 0,x_7+x_8+x_9+1 = 0,\\
        & x_{10}x_{11} = 0,x_{10}x_{12} = 0, x_{11}x_{12} = 0,x_{10}+x_{11}+x_{12}+1 = 0,\\
        & x_{26} x_{27} = 0, x_{26} + x_{27} + 1 = 0,\\
        & x_{28} x_{29} = 0, x_{28} + x_{29} + 1 = 0,\\
        \text{(Broadcast:)~} & x_{17} + x_{20} = 0,x_{20}+x_{23} = 0,\\
        &x_{18}+x_{21} = 0,x_{21}+x_{24} = 0,\\
        &x_{19}+x_{22} = 0,x_{22}+x_{25} = 0.\\      
    \end{aligned}
    \end{equation}
\end{enumerate}

Using the BCSFR algorithm, the feasible set of this problem is explicitly characterized by $33$ characteristic sets $\T_1,\ldots,\T_{33}$, which are provided in Table~\ref{table:lnc_example_results} in the appendices. We summarize the results of the BCSFR algorithm for this example in Table~\ref{table:LNC_illu_example}.

\begin{table}[h]
    \centering
    \caption{Results of the BCSFR Algorithm for the LNC Example in Sec.~\ref{subsec:example_lnc}}
    \label{table:LNC_illu_example}
	\begin{threeparttable}
		\begin{tabular}{>{\centering\arraybackslash}p{1.5cm}>{\centering\arraybackslash}p{3cm}>{\centering\arraybackslash}p{3cm}>{\centering\arraybackslash}p{3cm}}
			\toprule
			Runtime\tnote{\S} & Number of characteristic sets& Number of feasible solutions ($|\mathcal{S}|$) & Search space size (exhaustive search)\tnote{\dag}\\
			\midrule
			0.014s & 33 & 156 & 41,472\\
			\bottomrule
		\end{tabular}
		\begin{tablenotes}
			\footnotesize			
			\item[\S] The runtime of the algorithm was measured on a laptop equipped with an Intel Core Ultra~9 285H processor.
            \item[\dag] The search space size is computed via (\ref{eq:exhaustive_sparce}).
		\end{tablenotes}
	\end{threeparttable}
\end{table}

\section{Application to Storage Codes}
\label{sec-appl-storage}
Linear coding can be used for distributed storage systems (e.g., Windows Azure~\cite{huang2012erasure}, Facebook Analytics Hadoop cluster~\cite{sathiamoorthy2013xoring}), to improve the system efficiency (e.g., reduce storage overhead, increase data reliability). In distributed storage systems, a file that contains large-scale data is stored in a distributed manner in many nodes and the user can recover the file by accessing a subset of the nodes. A main difficulty in such systems is that nodes may fail periodically. Once nodes fail, new redundancy needs to be generated and stored into new nodes, which is known as \emph{repair}. A major repair cost is the number of nodes that participate in the repair process.

\subsection{Fundamentals of Locally Repairable Codes}
Our mathematical model of the storage code is based on locally repairable codes~(LRCs)~\cite{locality2012,lrc2014}. However, we will introduce the LRCs under consideration from a self-contained, network coding-based perspective, which is not identical to the LRCs introduced in~\cite{locality2012,lrc2014}. We first introduce definitions of an LRC, and at the end of this subsection, we provide an example to facilitate the understanding of these definitions.

Let $\ell,\eta$ be positive integers such that $\eta \le \ell$, and let $\mathcal{A} \subset \{1,2,\ldots,\ell\}$. Construct a layered DAG $\mathcal{G}_{\mathcal{A}} = (\mathcal{V}_{\mathcal{A}},\mathcal{E}_{\mathcal{A}})$ as follows:
\begin{enumerate}
    \item The node set is constructed by
    \begin{equation*}
        \mathcal{V}_\mathcal{A} = \{s\} \cup \mathcal{V}_1 \cup \mathcal{V}_2 \cup \mathcal{V}_3,
    \end{equation*}
    where $s$ is a node,
    \begin{equation*}
        {\mathcal{V}_1} = \{v_i:i \in \{1,\ldots,\ell\}\},
    \end{equation*}
    \begin{equation*}
        \mathcal{V}_2 = \{t_i: i \in \{1,\ldots,\ell\}\setminus\mathcal{A}\},
    \end{equation*}
    and
    \begin{equation*}
        \mathcal{V}_3 = \{u_1,\ldots,u_{\binom{\ell}{\eta}}\}.
    \end{equation*}
    \item The edge set is constructed by
    \begin{equation*}
    \begin{aligned}
        \mathcal E_{\mathcal A} = &
\{(s,v_i): i\in\{1,\ldots,\ell\}\} \\
&\cup \{(v_i,t_j): i\in\mathcal A,\ j\in\{1,\ldots,\ell\}\setminus\mathcal A\}\\
&\cup \{(x,u_k): k\in\{1,\ldots,\binom{\ell}{\eta}\},\ x\in \mathcal{C}_k\},
    \end{aligned}
    \end{equation*}
    where $\mathcal{C}_{\mathcal{A}} = \{\mathcal{C}_1,\ldots,\mathcal{C}_{\binom{\ell}{\eta}}\}$ is the collection of all $\eta$-subset of $\{v_i: i\in\mathcal A\} \cup \mathcal{V}_2$.
\end{enumerate}

Let $\omega$ be a positive integer such that $\omega \le \eta$. Following the definition of a network in Sec.~\ref{sec-appl-nc}, we consider a $(\mathcal{G}_{\mathcal{A}},s,\mathcal{U})$ network, where $\mathcal{G}_\mathcal{A}$ is a layered DAG defined above, $s$ is the source node that generates $\omega$ source symbols over $\mathbb{F}_2$, and $\mathcal{U} = \mathcal{V}_3$.

Following Sec.~\ref{sec-appl-nc}, we construct an augmented graph $\tilde{\mathcal G}_\mathcal{A}$ by adding $\omega$ imaginary edges entering node $s$. Then, we can define an LNC on the augmented graph $\tilde{\mathcal G}_\mathcal{A}$.

\begin{define}\label{def:LRC}
    An LNC for $(\mathcal{G}_{\mathcal{A}},s,\mathcal{U})$ is called an LRC if 
    \begin{enumerate}
        \item All nodes except for the source node $s$ are broadcast nodes (see (\ref{eq:broadcast_constraint_LNC})).
        \item The LNC is admissible.
    \end{enumerate}
\end{define}

We next provide an example to demonstrate the practical significance of the mathematical model introduced above.
\begin{example}\label{example:LRC}
    Let $\omega = 2$, $\eta = 2$, and $\ell = 3$. Let a \emph{file} be equally divided into $\omega = 2$ packets. For simplicity, we assume each packet consists of only one symbol $b_i$ over $\mathbb{F}_2$.\footnote{This is the same as the assumption in Sec.~\ref{sec-appl-nc}. This assumption does not affect the mathematical analysis. In practice, each packet often has a large number of symbols and can be represented as a column vector.} We store $b_1$, $b_2$, and $b_1 + b_2$ in $\ell = 3$ (storage) nodes $v_1$, $v_2$, and $v_3$ in a distributed manner. Assume that node $v_1$ fails at some time instant. We will add a new node $t_1$ to \emph{repair} the storage system, which can read data from the surviving nodes $v_2$ and $v_3$. We can store $b_2 + (b_1 + b_2) = b_1$ in node $t_1$. For the repaired storage system with nodes $t_1$, $v_2$, and $v_3$, one can access $(b_1,b_2)$ by any $\eta = 2$ nodes. Such a node repair process is shown in Fig.~\ref{fig:LRC_example1}.

    The above node repair process can be equivalently described by an LNC for $(\mathcal{G}_{\mathcal{A}},s,\mathcal{U})$. We show the layered DAG $\mathcal{G}_{\mathcal{A}}$ in Fig.~\ref{fig:LRC_example2}. In Definition~\ref{def:LRC}, an LRC is defined in the perspective of an LNC. We show the global encoding vector for every edge to define the LNC that is equivalent to the LRC in Fig.~\ref{fig:LRC_example1}:
    \begin{equation*}
        \begin{aligned}
            &{\bf f}_{(s,v_1)} = [1\ \ 0]^\top,\quad {\bf f}_{(s,v_2)} = [0\ \ 1]^\top,\quad {\bf f}_{(s,v_3)} = [1\ \ 1]^\top,\\
            &{\bf f}_{(v_2,t_1)} = {\bf f}_{(v_2,u_1)} = {\bf f}_{(v_2,u_3)} = [0\ \ 1]^\top,\\
            &{\bf f}_{(v_3,t_1)} = {\bf f}_{(v_3,u_2)} = {\bf f}_{(v_3,u_3)} = [1\ \ 1]^\top,\\
            &{\bf f}_{(t_1,u_1)} = {\bf f}_{(t_1,u_2)} = [1\ \ 0]^\top.\\
        \end{aligned}
    \end{equation*}
    Note that the $\binom{\ell}{\eta}$ user nodes guarantee that every $\eta$-subset of $\{v_i: i\in\mathcal A\} \cup \mathcal{V}_2$ can recover the $\omega$ source symbols. This property is of importance for LRCs.
\end{example}

\begin{figure}[!t]
	\centering
	\includegraphics[width=3in]{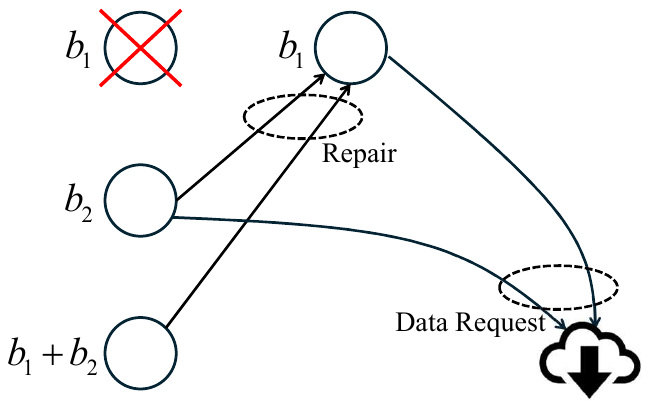}
	\caption{An example of an LRC code for repairing a failed node.}
	\label{fig:LRC_example1}
\end{figure}

\begin{figure}[!t]
	\centering
	\includegraphics[width=3in]{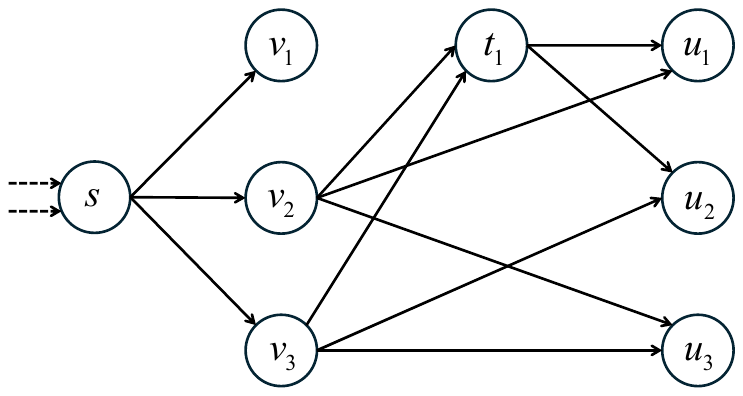}
	\caption{The layered DAG for the LRC in Example~\ref{example:LRC}.}
	\label{fig:LRC_example2}
\end{figure}

\subsection{Problem Formulation}\label{subsec:LRC_formulation}
Consider a layered DAG $\mathcal{G}_\mathcal{A}$. Similar to LNC problems introduced in Sec.~\ref{subsec:appl_nc_formulation}, the local encoding matrices are variables to determine. A slight difference in LRC problems is that several local encoding matrices are constant due to their physical meanings. We classify the local encoding matrices as follows:
\begin{enumerate}
    \item For node $s$: ${\bf G}_s$ is a constant matrix, which determines which linear combinations are stored in the current system.
    \item For $v_i \in \mathcal{V}_1 \setminus \mathcal{A}$: ${\bf G}_{v_i}$ is an empty matrix.
    \item For $v_i \in \mathcal{A}$: ${\bf G}_{v_i}$ is an all-one row vector, which represents that the information stored in $v_i$ can be accessed.
    \item For $t_i \in \mathcal{V}_2$: ${\bf G}_{t_i}$ is an $|\mathcal{A}| \times \binom{\ell-1}{\eta-1}$ matrix which needs to be determined. The broadcast constraint (\ref{eq:broadcast_constraint_LNC}) is imposed to the encoding coefficients of ${\bf G}_{t_i}$.
\end{enumerate}

Following Sec.~\ref{subsec:appl_nc_formulation}, let $\hat{\mathcal{V}} = \mathcal{V}_2$. Then, LRC problems can be formulated using the steps similar to those in Sec.~\ref{subsec:appl_nc_formulation}, see (\ref{eq:rank_constraint_LNC})--(\ref{eq:broadcast_constraint_LNC}).

\begin{remark}
    In our LRC problem, we actually adopt an online design strategy. The layered DAG we define represents a specific state during the operation of the distributed storage system. We design the repair scheme, namely, the local encoding matrices for nodes in $\mathcal{V}_2$, based on the current state, mainly including the linear combinations stored in the surviving nodes. 
    
    Such online designs are reasonable: LRCs are predominantly employed in large-scale archival storage systems, such as Microsoft Azure's Archive Storage, which are designed for ``cold'' data that is infrequently accessed. The repair of a failed node in such environments is treated as a background maintenance task, not a latency-critical operation that directly impacts user experience.
    
    In contrast, the LNC problem in Sec.~\ref{sec-appl-nc} utilizes an offline design strategy. The strict latency constraints of communication usually do not permit online design based on the states of preceding nodes.
\end{remark}

\subsection{An Illustrative Example}\label{subsec:LRC_illu_example}
\begin{figure}[!t]
	\centering
	\includegraphics[width=4in]{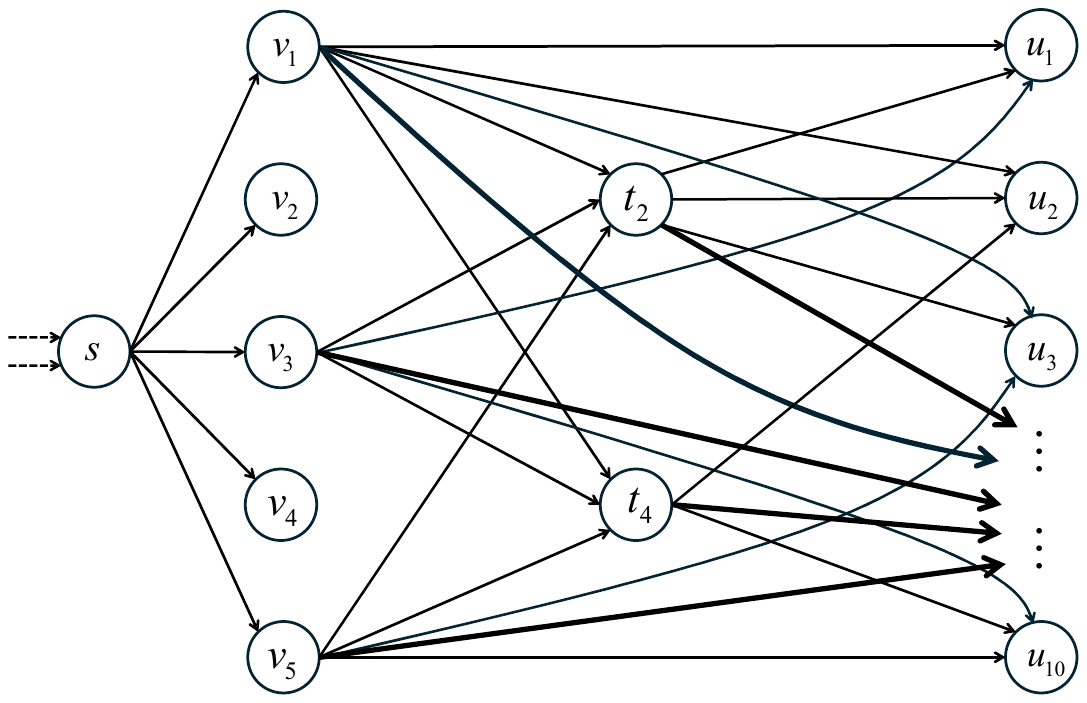}
	\caption{An LRC with $\omega = 2$, $\eta = 3$, and $\ell = 5$. The bold lines denote the edges entering $u_4, \ldots, u_9$ that are not shown in the figure.}
	\label{fig:LRC_example}
\end{figure}
Consider the LRC shown in Fig.~\ref{fig:LRC_example}. Let the local encoding matrix for the source node $s$ be 
\begin{equation*}
    {\bf G}_s = \begin{bmatrix}
        1 & 1 & 1 & 0 & 0\\
        0 & 0 & 1 & 1 & 1
    \end{bmatrix}.
\end{equation*}
Following Sec.~\ref{subsec:LRC_formulation}, ${\bf G}_{v_1}$, ${\bf G}_{v_3}$, and ${\bf G}_{v_5}$ are all-one rows. Let ${\bf G}_{t_2}$ and ${\bf G}_{t_4}$ be
\begin{equation*}
    {\bf G}_{t_2} = \begin{bmatrix}
        x_1 &x_1 &x_1&x_1&x_1&x_1\\
        x_2 &x_2 &x_2&x_2&x_2&x_2\\
        x_3 &x_3 &x_3&x_3&x_3&x_3
    \end{bmatrix},\quad {\bf G}_{t_4} = \begin{bmatrix}
        x_4 &x_4 &x_4&x_4&x_4&x_4\\
        x_5 &x_5 &x_5&x_5&x_5&x_5\\
        x_6 &x_6 &x_6&x_6&x_6&x_6
    \end{bmatrix},
\end{equation*}
where we preprocess the matrices using the broadcast constraints so that every column shares the same variables.

The LRC problem in the form of Problem~\ref{prob:P1} has the constraints:
\begin{enumerate}[label=\roman*)]
    \item \emph{Full-rank constraints: }
    \begin{equation*}
    {\rm rank}\left( [{\bf c}_{i_1}^\top,{\bf c}_{i_2}^\top,{\bf c}_{i_3}^\top] \right) = 2,\quad 1 \le i_1 < i_2 < i_3 \le 5,
\end{equation*}
where 
\begin{equation*}
\begin{aligned}
    &{\bf c}_{1} = (1,0),~{\bf c}_{2} = (x_1+x_2,x_2+x_3),~{\bf c}_{3} = (1,1),\\
    &{\bf c}_{4} = (x_4+x_5,x_5+x_6),~{\bf c}_{5} = (0,1).
    \end{aligned}
\end{equation*}
    \item \emph{Non-rank constraints:}
    \begin{equation*}
        \emptyset.
    \end{equation*}
\end{enumerate}

Using the BCSFR algorithm, the feasible set of this LRC problem is explicitly characterized by 3 characteristic sets $\T_1,\ldots,\T_{3}$, which are provided as follows:
\begin{align*}
    \T_1:\quad & x_2+x_1+1,\,x_3+x_1+1,\,x_6+x_5+1.\\
    \T_2:\quad & x_2+x_1,\,x_3+x_1+1,\,x_5+x_4+1.\\
    \T_3:\quad & x_2+x_1+1,\,x_3+x_1,\,x_6+x_4+1.
\end{align*}
The results of the BCSFR algorithm are summarized in Table~\ref{table:LRC_illu_example}.

\begin{table}[h]
    \centering
    \caption{Results of the BCSFR Algorithm for the LRC Example in Sec.~\ref{subsec:LRC_illu_example}}
    \label{table:LRC_illu_example}
	\begin{threeparttable}
		\begin{tabular}{>{\centering\arraybackslash}p{1.5cm}>{\centering\arraybackslash}p{3cm}>{\centering\arraybackslash}p{3cm}>{\centering\arraybackslash}p{3cm}}
			\toprule
			Runtime\tnote{\S} & Number of characteristic sets& Number of feasible solutions ($|\mathcal{S}|$) & Search space size (exhaustive search)\tnote{\dag}\\
			\midrule
			0.001s & 3 & 24 & 64\\
			\bottomrule
		\end{tabular}
		\begin{tablenotes}
			\footnotesize			
			\item[\S] The runtime of the algorithm was measured on a laptop equipped with an Intel Core Ultra~9 285H processor.
            \item[\dag] The search space size is computed via (\ref{eq:exhaustive_sparce}).
		\end{tablenotes}
	\end{threeparttable}
\end{table}

\section{Experimental Results}
\label{sec-experiment}
In this section, we conduct two experiments to validate the efficacy of the proposed BCSFR/Inc-BCSFR algorithm in designing LNCs and LRCs. In both experiments, we adopt the optimization objective function
\begin{equation}\label{eq:obj_function}
    f({\bf x}) = \sum_{i=1}^n x_i,
\end{equation}
where the summation is taken over the reals. Since this objective function exhibits a positive correlation with the \emph{complexity} of LNCs and LRCs, our objective is to determine the LNC and the LRC that yield the lowest complexity under specific constraints. A comprehensive explanation of the physical meaning of this objective function within the contexts of LNCs and LRCs is provided in the following two subsections.

To the best of our knowledge, there currently exist no other general solvers capable of solving Problem~\ref{prob:P1} with~(\ref{eq:obj_function}) as the objective function, other than an exhaustive search. In our experiments, we consider the cases where $n > 45$, which renders the exhaustive search computationally infeasible on a standard personal computer. 


\subsection{Low-Complexity LNC}\label{subsec:experiment_LNC}
For LNCs, the objective function~(\ref{eq:obj_function}) is designed to quantify the cumulative frequency of memory access events. As mentioned in Sec.~\ref{sec-appl-nc}, the linear combination of symbols in our mathematical model corresponds directly to the linear combination of data packets of large size in practical implementations. Given that large data packets are typically stored in dedicated memory devices, memory access and data fetching operations often dominate the energy consumption and processing latency in practical implementations~\cite{4262451,6757323}. Therefore, our objective function serves to minimize this dominant operational cost. Additionally, we note that this objective function is positively correlated with the number of XOR operations required for the linear combinations, which is also a useful measure of the encoding complexity.

\begin{figure}[!t]
	\centering
	\includegraphics[width=4in]{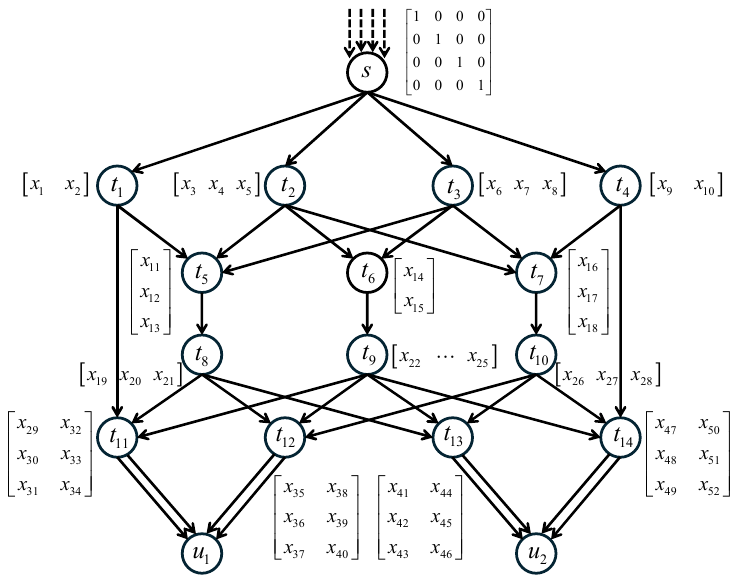}
	\caption{A network $(\mathcal{G},s,\mathcal{U})$ with $\omega = 4$, $\mathcal{V}_{\rm g} = \{t_i:i \in \{1,\ldots,14\} \setminus \{6\}\}$, $\mathcal{V}_{\rm r} = \{t_6\}$, and $\mathcal{V}_{\rm b} = \emptyset$. The local encoding matrix for node $s$ is an identity matrix, and the local encoding matrix for node $t_i$ is represented by $\bf x$.}
	\label{fig:LNC_experiment}
\end{figure}

In our experiment, we consider a network as shown in Fig.~\ref{fig:LNC_experiment}, where $\omega = 4$, $\mathcal{V}_{\rm g} = \{t_i:i \in \{1,\ldots,14\} \setminus \{6\}\}$, $\mathcal{V}_{\rm r} = \{t_6\}$, and $\mathcal{V}_{\rm b} = \emptyset$. The source node $s$ generates $\omega = 4$ packets and transmits them to the user nodes $u_1$ and $u_2$. The local encoding matrices are represented by the binary variables ${\bf x} = (x_1,\ldots,x_{52})$, and their explicit expressions are shown near the nodes in Fig.~\ref{fig:LNC_experiment}. Note that ${\bf G}_s$ is fixed as an identity matrix to reduce the number of variables in this problem.

Using the BCSFR algorithm, we obtain the results in Table~\ref{table:LNC_experiment}. Our algorithm efficiently characterizes the feasible set $\mathcal S$ by 55,910 characteristic sets in 3.67 seconds. In this experiment, we observe that the size of the feasible set is much smaller than the size of the search space, which is reduced by a factor of about $1.885 \times 10^{6}$. Thus, the optimization problem can be solved even by enumerating all points in the feasible set.

\begin{table}[ht]
    \centering
    \caption{Results of the BCSFR Algorithm for the LNC Problem in Sec.~\ref{subsec:experiment_LNC}}
    \label{table:LNC_experiment}
	\begin{threeparttable}
		\begin{tabular}{>{\centering\arraybackslash}p{1.5cm}>{\centering\arraybackslash}p{3cm}>{\centering\arraybackslash}p{3cm}>{\centering\arraybackslash}p{3cm}}
			\toprule
			Runtime\tnote{\S} & Number of characteristic sets& Number of feasible solutions ($|\mathcal{S}|$) & Search space size (exhaustive search)\tnote{\dag}\\
			\midrule
			3.67s & 55,910 & 1,194,393,600 & $2^{51}$\\
			\bottomrule
		\end{tabular}
		\begin{tablenotes}
			\footnotesize			
			\item[\S] The runtime of the algorithm was measured on a laptop equipped with an Intel Core Ultra~9 285H processor.
            \item[\dag] The search space size is computed via (\ref{eq:exhaustive_sparce}).
		\end{tablenotes}
	\end{threeparttable}
\end{table}

With the obtained characteristic sets, we formulate Problem~\ref{prob:P2} with the objective function~(\ref{eq:obj_function}). Using the MaxSAT solver~\cite{Ignatiev2019RC2AE}, Problem~\ref{prob:P2} can be solved efficiently. The number of optimal solutions is 128. We present one of the optimal solutions below, where the listed $x_i$ in \eqref{listx-1} are equal to 1, and the unlisted $x_i$ are equal to~0.
\begin{equation}\label{listx-1}
\begin{array}{ll}
&x_1,\,
x_2,\,
x_4,\,
x_6,\,
x_8,\,
x_9,\,
x_{10},\,
x_{11},\,
x_{12},\,
x_{13},\,
x_{14},\,
x_{17},\,
x_{18},\,
x_{20},\,\\
&x_{21},\,
x_{22},\,
x_{25},\,
x_{26},\,
x_{27},\,
x_{31},\,
x_{32},\,
x_{37},\,
x_{38},\,
x_{43},\,
x_{44},\,
x_{49},\,
x_{50}.
\end{array}
\end{equation}

\subsection{Low-Complexity LRC}\label{subsec:experiment_LRC}
For LRCs, the objective function~(\ref{eq:obj_function}) measures the number of node accesses required for repairing the failed nodes. Note that if the node $i$ participates in the repair of both nodes $j$ and $k$, its contribution to the number of nodes accesses is 2. Minimizing the number of node accesses is an optimization objective of practical significance in large-scale distributed storage systems~\cite{locality2012,lrc2014}.

Let $\omega=18$, $\eta = 23$, $\ell = 25$, $\mathcal{V}_1 = \{v_i: i \in \{1,\ldots,25\}\}$, $\mathcal{A} = \{3,\ldots,25\}$, $\mathcal{V}_2 = \{t_1,t_2\}$, and $\mathcal{V}_3 = \{u_i:i \in \{1,\ldots,\binom{25}{23}\}\}$. We can construct a network $(\mathcal{G}_{\mathcal{A}},s,\mathcal{U})$ following Sec.~\ref{sec-appl-storage}. 
Let ${\bf G}_{t_1}$ and ${\bf G}_{t_2}$ be represented by ${\bf x} = (x_1,\ldots,x_{46})$ as
\begin{equation*}
    {\bf G}_{t_1} = \begin{bmatrix}
        x_1 & \cdots & x_1\\
        \vdots & \cdots & \vdots\\
        x_{23} & \cdots & x_{23}
    \end{bmatrix},\quad {\bf G}_{t_2} = \begin{bmatrix}
        x_{24} & \cdots & x_{24}\\
        \vdots & \cdots & \vdots\\
        x_{46} & \cdots & x_{46}
    \end{bmatrix}.
\end{equation*}
Let the local encoding matrix ${\bf G}_s$ for the source node $s$ be
\begin{equation*}
    {\bf G}_s = \begin{bmatrix}
        0 \,\, 1 \,\, 1 \,\, 0 \,\, 0 \,\, 1 \,\, 1 \,\, 0 \,\, 0 \,\, 0 \,\, 1 \,\, 0 \,\, 0 \,\, 1 \,\, 1 \,\, 0 \,\, 0 \,\, 1 \,\, 0 \,\, 1 \,\, 0 \,\, 0 \,\, 1 \,\, 0 \,\, 1\\
1 \,\, 1 \,\, 1 \,\, 0 \,\, 0 \,\, 1 \,\, 0 \,\, 0 \,\, 0 \,\, 0 \,\, 0 \,\, 0 \,\, 1 \,\, 0 \,\, 1 \,\, 1 \,\, 0 \,\, 0 \,\, 1 \,\, 0 \,\, 1 \,\, 1 \,\, 1 \,\, 0 \,\, 1\\
0 \,\, 0 \,\, 1 \,\, 0 \,\, 0 \,\, 0 \,\, 1 \,\, 1 \,\, 0 \,\, 0 \,\, 0 \,\, 0 \,\, 1 \,\, 1 \,\, 1 \,\, 1 \,\, 1 \,\, 0 \,\, 1 \,\, 1 \,\, 1 \,\, 0 \,\, 0 \,\, 0 \,\, 0\\
1 \,\, 0 \,\, 1 \,\, 0 \,\, 1 \,\, 1 \,\, 0 \,\, 0 \,\, 0 \,\, 1 \,\, 0 \,\, 0 \,\, 1 \,\, 0 \,\, 0 \,\, 0 \,\, 1 \,\, 0 \,\, 1 \,\, 1 \,\, 1 \,\, 0 \,\, 0 \,\, 1 \,\, 0\\
1 \,\, 0 \,\, 0 \,\, 0 \,\, 1 \,\, 0 \,\, 0 \,\, 1 \,\, 0 \,\, 1 \,\, 0 \,\, 1 \,\, 1 \,\, 0 \,\, 0 \,\, 1 \,\, 1 \,\, 0 \,\, 1 \,\, 1 \,\, 0 \,\, 0 \,\, 0 \,\, 1 \,\, 0\\
0 \,\, 0 \,\, 1 \,\, 0 \,\, 1 \,\, 0 \,\, 1 \,\, 1 \,\, 0 \,\, 0 \,\, 1 \,\, 1 \,\, 0 \,\, 0 \,\, 0 \,\, 1 \,\, 0 \,\, 0 \,\, 1 \,\, 0 \,\, 0 \,\, 1 \,\, 0 \,\, 1 \,\, 1\\
0 \,\, 1 \,\, 1 \,\, 0 \,\, 1 \,\, 1 \,\, 1 \,\, 0 \,\, 0 \,\, 0 \,\, 0 \,\, 0 \,\, 0 \,\, 1 \,\, 1 \,\, 1 \,\, 0 \,\, 0 \,\, 0 \,\, 1 \,\, 0 \,\, 0 \,\, 0 \,\, 1 \,\, 1\\
1 \,\, 0 \,\, 1 \,\, 1 \,\, 1 \,\, 1 \,\, 1 \,\, 0 \,\, 1 \,\, 1 \,\, 1 \,\, 1 \,\, 1 \,\, 0 \,\, 0 \,\, 0 \,\, 0 \,\, 1 \,\, 0 \,\, 0 \,\, 1 \,\, 0 \,\, 0 \,\, 1 \,\, 0\\
1 \,\, 0 \,\, 1 \,\, 0 \,\, 0 \,\, 0 \,\, 0 \,\, 1 \,\, 0 \,\, 1 \,\, 0 \,\, 1 \,\, 1 \,\, 1 \,\, 1 \,\, 1 \,\, 0 \,\, 0 \,\, 1 \,\, 0 \,\, 0 \,\, 1 \,\, 0 \,\, 0 \,\, 1\\
0 \,\, 0 \,\, 0 \,\, 1 \,\, 1 \,\, 0 \,\, 0 \,\, 0 \,\, 1 \,\, 1 \,\, 1 \,\, 1 \,\, 1 \,\, 0 \,\, 1 \,\, 1 \,\, 1 \,\, 0 \,\, 1 \,\, 1 \,\, 1 \,\, 1 \,\, 0 \,\, 0 \,\, 1\\
0 \,\, 1 \,\, 1 \,\, 0 \,\, 1 \,\, 0 \,\, 0 \,\, 0 \,\, 0 \,\, 0 \,\, 0 \,\, 1 \,\, 1 \,\, 1 \,\, 1 \,\, 0 \,\, 0 \,\, 1 \,\, 1 \,\, 0 \,\, 1 \,\, 0 \,\, 1 \,\, 0 \,\, 1\\
1 \,\, 1 \,\, 0 \,\, 0 \,\, 1 \,\, 1 \,\, 0 \,\, 1 \,\, 0 \,\, 1 \,\, 1 \,\, 0 \,\, 0 \,\, 0 \,\, 0 \,\, 1 \,\, 1 \,\, 1 \,\, 0 \,\, 0 \,\, 1 \,\, 1 \,\, 0 \,\, 1 \,\, 0\\
0 \,\, 0 \,\, 1 \,\, 0 \,\, 1 \,\, 0 \,\, 0 \,\, 0 \,\, 1 \,\, 0 \,\, 1 \,\, 0 \,\, 1 \,\, 0 \,\, 0 \,\, 1 \,\, 1 \,\, 1 \,\, 1 \,\, 1 \,\, 0 \,\, 1 \,\, 1 \,\, 1 \,\, 1\\
1 \,\, 0 \,\, 1 \,\, 0 \,\, 0 \,\, 0 \,\, 1 \,\, 1 \,\, 1 \,\, 0 \,\, 1 \,\, 1 \,\, 1 \,\, 0 \,\, 0 \,\, 0 \,\, 0 \,\, 1 \,\, 1 \,\, 1 \,\, 1 \,\, 0 \,\, 0 \,\, 0 \,\, 1\\
1 \,\, 0 \,\, 1 \,\, 0 \,\, 1 \,\, 1 \,\, 1 \,\, 1 \,\, 0 \,\, 0 \,\, 0 \,\, 0 \,\, 0 \,\, 1 \,\, 1 \,\, 1 \,\, 1 \,\, 1 \,\, 1 \,\, 1 \,\, 1 \,\, 1 \,\, 0 \,\, 0 \,\, 1\\
0 \,\, 0 \,\, 0 \,\, 0 \,\, 1 \,\, 1 \,\, 0 \,\, 0 \,\, 1 \,\, 1 \,\, 0 \,\, 0 \,\, 0 \,\, 1 \,\, 0 \,\, 1 \,\, 0 \,\, 0 \,\, 0 \,\, 0 \,\, 0 \,\, 0 \,\, 1 \,\, 0 \,\, 1\\
0 \,\, 0 \,\, 0 \,\, 1 \,\, 1 \,\, 0 \,\, 1 \,\, 0 \,\, 0 \,\, 1 \,\, 0 \,\, 0 \,\, 1 \,\, 1 \,\, 0 \,\, 0 \,\, 0 \,\, 0 \,\, 0 \,\, 0 \,\, 0 \,\, 1 \,\, 0 \,\, 0 \,\, 0\\
1 \,\, 0 \,\, 1 \,\, 0 \,\, 1 \,\, 0 \,\, 0 \,\, 1 \,\, 1 \,\, 1 \,\, 0 \,\, 0 \,\, 0 \,\, 1 \,\, 0 \,\, 1 \,\, 1 \,\, 0 \,\, 1 \,\, 0 \,\, 0 \,\, 0 \,\, 0 \,\, 0 \,\, 1
    \end{bmatrix}.
\end{equation*}

Similar to the example in Sec.~\ref{subsec:LRC_illu_example}, we preprocess the matrices using the broadcast constraints so that every column shares the same variables. Then, we can formulate the LRC problem.

Using the Inc-BCSFR algorithm, we obtain the results in Table~\ref{table:LRC_experiment}. Our algorithm efficiently characterizes the feasible set $\mathcal S$ by 24 characteristic sets in 9.2 seconds. In this experiment, although the size of the feasible set is still very large, it is fully characterized by only 24 characteristic sets of simple forms, and these characteristic sets can be effectively utilized by optimizers. Note that the size of the feasible set depends only on the linear coding problem itself, and is independent of our algorithm.

\begin{table}[ht]
    \centering
    \caption{Results of the Inc-BCSFR Algorithm for the LRC Problem in Sec.~\ref{subsec:experiment_LRC}}
    \label{table:LRC_experiment}
	\begin{threeparttable}
		\begin{tabular}{>{\centering\arraybackslash}p{1.5cm}>{\centering\arraybackslash}p{3cm}>{\centering\arraybackslash}p{3cm}>{\centering\arraybackslash}p{3cm}}
			\toprule
			Runtime\tnote{\S} & Number of characteristic sets& Number of feasible solutions ($|\mathcal{S}|$) & Search space size (exhaustive search)\tnote{\dag}\\
			\midrule
			9.2s & 24 & 11,132,555,231,232 & $2^{46}$\\
			\bottomrule
		\end{tabular}
		\begin{tablenotes}
			\footnotesize			
			\item[\S] The runtime of the algorithm was measured on a laptop equipped with an Intel Core Ultra~9 285H processor.
            \item[\dag] The search space size is computed via (\ref{eq:exhaustive_sparce}).
		\end{tablenotes}
	\end{threeparttable}
\end{table}

With the characteristic sets, we have Problem~\ref{prob:P2} with the objective function~(\ref{eq:obj_function}). Using the MaxSAT solver~\cite{Ignatiev2019RC2AE}, Problem~\ref{prob:P2} can be solved efficiently. The number of optimal solutions is 384. We present one of the optimal solutions below, where the listed $x_i$ in \eqref{listx-2} are equal to 1, and the unlisted $x_i$ are equal to 0.
\begin{equation}\label{listx-2}
x_{13},\,x_{31},\,x_{39},\,x_{41},\,x_{45}.
\end{equation}

\section{Conclusion and Discussion}
\label{sec-conc}
In this paper, we consider linear coding problems in which full-rank constraints are imposed on several symbolic matrices. These full-rank constraints guarantee the basic properties of the codes (e.g., decodability and minimum distance), and make the codes feasible. We develop a CS-based method to derive full-rank equivalence conditions of symbolic matrices over the binary field, and present the BCSFR algorithm to effectively characterize the equivalence conditions as the zeros of a series of characteristic sets. 

In linear coding problems, it is of fundamental interest to find a feasible code under which a given objective is optimal or near-optimal, which leads to a mathematical optimization problem. Solving such an optimization problem is difficult because it is not straightforward to determine the feasible set under the full-rank constraints. The equivalence conditions obtained by the BCSFR algorithm give a good characterization of the feasible set by characteristic sets, and thus simplify the problem solving.

To show the effectiveness of the BCSFR algorithm, we conduct several experiments on two types of well-known linear codes, LNCs and LRCs. To the best of our knowledge, there do not exist effective methods in the field of coding to solve the optimization problems in our experiments. Utilizing the BCSFR algorithm combined with the MaxSAT solver, we can find the optimal LNCs and LRCs.

The proposed CS method holds significant potential for broader application. We highlight several promising directions:
\begin{enumerate}
    \item Broadening the utilization of the feasible set: This study focuses on selecting a single optimal solution. Future work could address problems that require multiple solutions or the whole feasible set.
    \item Adapting the algorithmic objective: While our current CS-based method fully characterizes the feasible set, it can be modified to rapidly find a single feasible solution. This will be significantly faster than a complete characterization and highly useful for the problems concerned only with feasibility.
    \item Addressing other linear coding constraints: This work focuses on full-rank constraints. However, the proposed CS-based method can be extended to handle other linear space-related constraints, such as subspace inclusion. Many critical constraints in linear coding can be written as linear space-related constraints.
    \item Generalizing to arbitrary finite fields: The CS-based method holds potential for extension to general finite fields.
\end{enumerate}

\appendices
\section{An Example of the BCSFR Algorithm}\label{appendix:BCSFR_example}

\begin{example}
Consider $n=6$, $v=3$, and $\P = \{f_1,f_2,f_3,f_4\}$, where
\begin{align*}
    f_1 &= x_1x_7+(x_1+1)x_2x_4x_6x_9,\\
	f_2 &= (x_2+x_5)x_7+x_8+x_3x_4x_6x_9,\\
	f_3 &= x_3x_6x_7+x_5x_8,\\
	f_4 &= x_5+1.	
\end{align*}
Note that $x_1,\ldots,x_6$ are $\x$-variables and $x_7,x_8,x_9$ are $\tilde{\x}$-variables.

\noindent\textbf{Input}. $\P = \{f_1,f_2,f_3,f_4\}$.

\noindent\textbf{Step 1}. Let $\P^* = \{\{f_1,f_2,f_3,f_4\}\}$ and $\T^* = \emptyset$.

\noindent\textbf{Step 2}. Since $\P^* = \{\{f_1,f_2,f_3,f_4\}\} \ne \emptyset$, enter the while-loop in Line~2.

\noindent\textbf{Step 3}. 
Select a branch $\mathcal Q = \{f_1,f_2,f_3,f_4\}$, and let $\P^* = \emptyset$ and $\T = \emptyset$.

\noindent\textbf{Step 4}. Do the while-loop in Line~6, see Steps 4.1.1 -- 4.5.5.

\noindent\textbf{Step 4.1.1}. Since $\mathcal Q\backslash\{0\} = \{f_1,f_2,f_3,f_4\} \ne \emptyset$, enter the while-loop in Line~6.

\noindent\textbf{Step 4.1.2.1} (\emph{Reduction by monic polynomials: Lines~7--11}). Choose $f_4 = x_5 + 1$ and substitute $x_5 = 1$ into $\mathcal{Q}$ and $\T$. Then $\mathcal Q = \{f_1, f_5 = x_3x_4x_6x_9+x_8+(x_2+1)x_7, f_6 = x_8+x_3x_6x_7\}$ and $\T = \{x_5+1\}$.

\noindent\textbf{Step 4.1.2.2} (\emph{Reduction by monic polynomials: Lines~7--11}). Choose $f_6 = x_8+x_3x_6x_7$ and substitute $x_8 = x_3x_6x_7$ into $\mathcal{Q}$ and $\T$. Then $\mathcal Q = \{f_1, f_7 = x_3x_4x_6x_9+(x_3x_6+x_2+1)x_7\}$ and $\T = \{x_5+1, x_8+x_3x_6x_7\}$. Observe that there is no $f \in\mathcal Q$ with $I = 1$.


\noindent\textbf{Step 4.1.3} (\emph{Choosing polynomial: Lines 14--21}). Since ${\rm cls}(f) \ge 7$ for all $f \in \mathcal{Q}$, we goto Line~17 and let $k^* = 9$. For $k^* = 9$, $|f \in \mathcal{Q}:{\rm cls}(f) = k^*| = \{f_1,f_7\}$. So, we choose $f_7$ by the 2nd-CSO.

\noindent\textbf{Step 4.1.4} (\emph{Initial decomposition: Lines 22--32}). For $f_7 = x_3x_4x_6x_9+(x_3x_6+x_2+1)x_7$, where $I = x_3x_4x_6$ and $U = (x_3x_6+x_2+1)x_7$, we goto Line~30. Then we update $\P^* = \{\P_1\}$ and $\mathcal Q = \{f_1, x_3x_4x_6+1\}$, where $\P_1 = \{f_1, x_5+1, x_8+x_3x_6x_7, x_3x_4x_6, (x_3x_6+x_2+1)x_7\}$.

\noindent\textbf{Step 4.1.5} (\emph{Substitution: Line 33}). Substituting $x_9 = (x_3x_6+x_2+1)x_7$ into $\mathcal Q$ and $\T$, we obtain $\mathcal Q = \{f_8 = (x_1+1)x_2x_3x_4x_6x_7+x_1x_7, f_{9} = x_3x_4x_6+1\}, \T = \{x_5+1, x_8+x_3x_6x_7, x_9+(x_3x_6+x_2+1)x_7\}$. 

\noindent\textbf{Step 4.2.1}. Since $\mathcal Q\backslash\{0\} = \{f_{8},f_{9}\} \ne \emptyset$, continue the while-loop in Line~6.

\noindent\textbf{Step 4.2.2} (\emph{Reduction by monic polynomials: Lines~7--11}). $\forall f\in \mathcal Q, I \ne 1$.


\noindent\textbf{Step 4.2.3} (\emph{Choosing polynomial: Lines 14--21}). Since $|f \in \mathcal{Q}:{\rm cls}(f) \le n| > 0$, we goto Line~15 and choose $f_{9} = x_3x_4x_6+1$.

\noindent\textbf{Step 4.2.4} (\emph{Initial decomposition: Lines 22--32}). For $f_{9} = x_3x_4x_6+1$, where $I = x_3x_4$ and $U = 1$, we perform a fast decomposition using Lines 23--26. Then we update $\mathcal Q = \{((x_1+1)x_6+1)x_7\}$ and $\T = \{x_3+1, x_4+1, x_5+1, x_8+x_7, x_9+x_2x_7\}$. 

\noindent\textbf{Step 4.2.5} (\emph{Substitution: Line 33}). Substituting $x_6 = 1$ into $\mathcal Q$ and $\T$, we obtain $\mathcal Q = \{f_{10} = ((x_1+1)x_2+x_1)x_7\}$ and $\T = \{x_3+1, x_4+1, x_5+1, x_6+1, x_8+x_7, x_9+x_2x_7\}$. 

\noindent\textbf{Step 4.3.1}. Since $\mathcal Q\backslash\{0\} = \{f_{10}\} \ne \emptyset$, continue the while-loop in Line~6.

\noindent\textbf{Step 4.3.2} (\emph{Reduction by monic polynomials: Lines~7--11}). $\forall f\in \mathcal Q, I \ne 1$.


\noindent\textbf{Step 4.3.3} (\emph{Choosing polynomial: Lines 14--21}). Since ${\rm cls}(f) \ge 7$ for all $f \in \mathcal{Q}$, we goto Line~17 and let $k^* = 9$. For $k^* = 9$, there exists $x_9+x_2x_7$ in $\T$, then update $k^*= 8$. For $k^* = 8$, there exists $x_8+x_7$ in $\T$, then update $k^* = 7$. For $k^* = 7$, we choose $f_{10} = ((x_1+1)x_2+x_1)x_7 \in \mathcal Q$.

\noindent\textbf{Step 4.3.4} (\emph{Initial decomposition: Lines 22--32}). For $f_{10} = ((x_1+1)x_2+x_1)x_7$, where $I = (x_1+1)x_2+x_1$ and $U = 0$, we goto Line~30. Then we update $\P^* = \{\P_1, \P_2\}$ and $\mathcal Q = \{f_{11} = (x_1+1)x_2+x_1+1\}$, where $\P_2 = \{x_3+1, x_4+1, x_5+1, x_6+1, x_8+x_7, x_9+x_2x_7, (x_1+1)x_2+x_1, 0\}$.

\noindent\textbf{Step 4.3.5} (\emph{Substitution: Line 33}). Substituting $x_7 = 0$ into $\mathcal Q$ and $\T$, we obtain $\mathcal Q = \{f_{11}\}$ and $\T = \{x_3+1, x_4+1, x_5+1, x_6+1, x_7, x_8, x_9\}$. 

\noindent\textbf{Step 4.4.1 -- Step 4.4.5}. Similar to Step 4.1.1 -- Step 4.1.5, we obtain $\P^* = \{\P_1, \P_2, \P_3\}$, $\mathcal Q = \{x_1\}$ and $\T = \{x_2+1,x_3+1, x_4+1, x_5+1, x_6+1, x_7, x_8, x_9\}$, where $\P_3 = \{x_3+1, x_4+1, x_5+1, x_6+1, x_7, x_8, x_9, x_1+1\}$.







\noindent\textbf{Step 4.5.1 -- Step 4.5.5}. Similar to Step 4.1.1 -- Step 4.1.5, we obtain $\P^* = \{\P_1, \P_2, \P_3\}$, $\mathcal Q = \emptyset$, and $\T = \{x_1, x_2+1, x_3+1, x_4+1, x_5+1, x_6+1, x_7, x_8, x_9\}$. 

\noindent\textbf{Step 5} (\emph{Line 36}). Since $\mathcal Q = \emptyset$, exit the while-loop in Line~6. Observing $\T$ is admissible, we update $\T^* = \{\T_1\}$, where $\T_1 = {\rm Trunc}_\x(\T) = \{x_1, x_2+1, x_3+1, x_4+1, x_5+1, x_6+1\}$.




\noindent\textbf{Step 6}. Since $\P^* = \{\P_1, \P_2, \P_3\}\ne \emptyset$, continue the while-loop in Line~2.

\noindent\textbf{Step 7}. Select a branch $\mathcal Q = \P_1 = \{f_1, x_5+1, x_8+x_3x_6x_7, f_{12} = x_3x_4x_6, f_{13} = (x_3x_6+x_2+1)x_7\}$, and let $\P^* = \{\P_2, \P_3\}$ and $\T = \emptyset$.

\noindent\textbf{Step 8}. Do the while-loop in Line~6, see Steps 8.1.1 -- 8.3.3.

\noindent\textbf{Step 8.1.1 -- Step 8.1.5}. Similar to Step 4.1.1 -- Step 4.1.5, we obtain $\P^* = \{\P_2,\P_3,\P_4\}$, $\mathcal Q = \{x_1x_7, (x_2+1)x_7, x_3x_4+1\}$, and $\T = \{x_5+1,x_6, x_8\}$, where $\P_4 = \{f_1, f_{13}, x_5+1, x_8+x_3x_6x_7, x_3x_4, 0\}$.








\noindent\textbf{Step 8.2.1 -- Step 8.2.5}. Similar to Step 4.1.1 -- Step 4.1.5, we obtain $\P^* = \{\P_2,\P_3,\P_4\}$, $\mathcal Q = \{x_1x_7, (x_2+1)x_7\}$, and $\T = \{x_3+1, x_4+1, x_5+1, x_6, x_8\}$.







\noindent\textbf{Step 8.3.1}. Since $\mathcal Q\backslash\{0\} \ne \emptyset$, continue the while-loop in Line~6.

\noindent\textbf{Step 8.3.2} (\emph{Reduction by monic polynomials: Lines~7--11}). $\forall f\in Q, I \ne 1$.


\noindent\textbf{Step 8.3.3} (\emph{Choosing polynomial: Lines 14--21}). Since ${\rm cls}(f) \ge 7$ for all $f \in \mathcal{Q}$, we goto Line~17 and let $k^* = 9$. Since ${\rm cls}(f) \ne 9$, $\forall f\in \mathcal Q \cup \T$, \textbf{this branch is pruned}. Let $\T  = \emptyset$ and exit the while-loop in Line~6.

\noindent\textbf{Step 9}. Similarly, we repeat the above process until $\P^* = \emptyset$. We finally obtain $\T^* = \{\T_1,\T_2\}$, where
\begin{align*}
    \T_1 &= \{x_1,x_2+1,x_3+1,x_4+1,x_5+1,x_6+1\},\\
    \T_2 &= \{x_1+1,x_3+1,x_4+1,x_5+1,x_6+1\}.
\end{align*}

\noindent\textbf{Output}. $\T^* = \{\T_1, \T_2\}$.
\end{example}

\section{Numerical Results for the Illustrative Example in Sec.~\ref{subsec:example_lnc}}
We show the 33 characteristic sets obtained by the BCSFR algorithm in the following table, where \textbf{df} in the first line of the table represents the degree of freedom~(see Definition~\ref{def:CS}).
{
\footnotesize
\rowcolors{2}{gray!10}{white}
\begin{longtable}{>{\centering\arraybackslash}p{0.5cm}>{\centering\arraybackslash}p{0.5cm}>{\centering\arraybackslash}p{14cm}}
    \caption{Feasible Set of the LNC Problem in Sec.~\ref{subsec:example_lnc}} \label{table:lnc_example_results} \\

    \toprule 
    \rowcolor{white}
    \textbf{Index} & \textbf{df} & \textbf{Polynomials in the characteristic set} \\
    \midrule 
    \endfirsthead

    \rowcolor{white}
    \multicolumn{3}{c}{{\bfseries \tablename\ \thetable{} Continued from previous page}} \\
    \toprule
    \rowcolor{white}
    \textbf{Index} & \textbf{df} & \textbf{Polynomials in the characteristic set} \\
    \midrule
    \endhead

    \midrule
    \multicolumn{3}{r}{{Continued on next page}} \\
    \endfoot

    \bottomrule 
    \endlastfoot

    $\T_1$&1&$x_1+1,\, x_2,\, x_3,\, x_4,\, x_5,\, x_6+1,\, x_7,\, x_8+1,\, x_9,\, x_{10},\, x_{11}+1,\, x_{12},\, x_{13}+1,\, x_{14}+1,\, x_{15},\, x_{16}+1,\, x_{17}+1,\, x_{18}+1,\, x_{20}+1,\, x_{21}+1,\, x_{22}+x_{19},\, x_{23}+1,\, x_{24}+1,\, x_{25}+x_{19},\, x_{26}+1,\, x_{27},\, x_{28}+1,\, x_{29}$\\
        $\T_2$&1&$x_1+1,\, x_2,\, x_3,\, x_4,\, x_5,\, x_6+1,\, x_7,\, x_8+1,\, x_9,\, x_{10},\, x_{11}+1,\, x_{12},\, x_{13}+1,\, x_{14}+1,\, x_{15}+1,\, x_{16}+1,\, x_{17}+1,\, x_{19}+x_{18}+1,\, x_{20}+1,\, x_{21}+x_{18},\, x_{22}+x_{18}+1,\, x_{23}+1,\, x_{24}+x_{18},\, x_{25}+x_{18}+1,\, x_{26}+1,\, x_{27},\, x_{28}+1,\, x_{29}$\\
        $\T_3$&0&$x_1+1,\, x_2,\, x_3,\, x_4,\, x_5,\, x_6+1,\, x_7,\, x_8,\, x_9+1,\, x_{10},\, x_{11}+1,\, x_{12},\, x_{13}+1,\, x_{14}+1,\, x_{15}+1,\, x_{16}+1,\, x_{17}+1,\, x_{18},\, x_{19}+1,\, x_{20}+1,\, x_{21},\, x_{22}+1,\, x_{23}+1,\, x_{24},\, x_{25}+1,\, x_{26}+1,\, x_{27},\, x_{28}+1,\, x_{29}$\\
        $\T_4$&0&$x_1+1,\, x_2,\, x_3,\, x_4,\, x_5,\, x_6+1,\, x_7,\, x_8,\, x_9+1,\, x_{10},\, x_{11}+1,\, x_{12},\, x_{13}+1,\, x_{14}+1,\, x_{15}+1,\, x_{16}+1,\, x_{17}+1,\, x_{18}+1,\, x_{19}+1,\, x_{20}+1,\, x_{21}+1,\, x_{22}+1,\, x_{23}+1,\, x_{24}+1,\, x_{25}+1,\, x_{26}+1,\, x_{27},\, x_{28}+1,\, x_{29}$\\
        $\T_5$&1&$x_1+1,\, x_2,\, x_3,\, x_4,\, x_5+1,\, x_6,\, x_7,\, x_8,\, x_9+1,\, x_{10},\, x_{11},\, x_{12}+1,\, x_{13}+1,\, x_{14}+1,\, x_{15},\, x_{16}+1,\, x_{17}+1,\, x_{18}+1,\, x_{20}+1,\, x_{21}+1,\, x_{22}+x_{19},\, x_{23}+1,\, x_{24}+1,\, x_{25}+x_{19},\, x_{26}+1,\, x_{27},\, x_{28}+1,\, x_{29}$\\
        $\T_6$&1&$x_1+1,\, x_2,\, x_3,\, x_4,\, x_5+1,\, x_6,\, x_7,\, x_8,\, x_9+1,\, x_{10},\, x_{11},\, x_{12}+1,\, x_{13}+1,\, x_{14}+1,\, x_{15}+1,\, x_{16}+1,\, x_{17}+1,\, x_{19}+x_{18}+1,\, x_{20}+1,\, x_{21}+x_{18},\, x_{22}+x_{18}+1,\, x_{23}+1,\, x_{24}+x_{18},\, x_{25}+x_{18}+1,\, x_{26}+1,\, x_{27},\, x_{28}+1,\, x_{29}$\\
        $\T_7$&0&$x_1+1,\, x_2,\, x_3,\, x_4,\, x_5+1,\, x_6,\, x_7,\, x_8+1,\, x_9,\, x_{10},\, x_{11},\, x_{12}+1,\, x_{13}+1,\, x_{14}+1,\, x_{15}+1,\, x_{16}+1,\, x_{17}+1,\, x_{18},\, x_{19}+1,\, x_{20}+1,\, x_{21},\, x_{22}+1,\, x_{23}+1,\, x_{24},\, x_{25}+1,\, x_{26}+1,\, x_{27},\, x_{28}+1,\, x_{29}$\\
        $\T_8$&0&$x_1+1,\, x_2,\, x_3,\, x_4,\, x_5+1,\, x_6,\, x_7,\, x_8+1,\, x_9,\, x_{10},\, x_{11},\, x_{12}+1,\, x_{13}+1,\, x_{14}+1,\, x_{15}+1,\, x_{16}+1,\, x_{17}+1,\, x_{18}+1,\, x_{19}+1,\, x_{20}+1,\, x_{21}+1,\, x_{22}+1,\, x_{23}+1,\, x_{24}+1,\, x_{25}+1,\, x_{26}+1,\, x_{27},\, x_{28}+1,\, x_{29}$\\
        $\T_9$&1&$x_1,\, x_2,\, x_3+1,\, x_4+1,\, x_5,\, x_6,\, x_7,\, x_8+1,\, x_9,\, x_{10},\, x_{11}+1,\, x_{12},\, x_{13}+1,\, x_{14}+1,\, x_{15},\, x_{16}+1,\, x_{17}+1,\, x_{18}+1,\, x_{20}+1,\, x_{21}+1,\, x_{22}+x_{19},\, x_{23}+1,\, x_{24}+1,\, x_{25}+x_{19},\, x_{26}+1,\, x_{27},\, x_{28}+1,\, x_{29}$\\
        $\T_{10}$&1&$x_1,\, x_2,\, x_3+1,\, x_4+1,\, x_5,\, x_6,\, x_7,\, x_8+1,\, x_9,\, x_{10},\, x_{11}+1,\, x_{12},\, x_{13}+1,\, x_{14}+1,\, x_{15}+1,\, x_{16}+1,\, x_{17}+1,\, x_{19}+x_{18}+1,\, x_{20}+1,\, x_{21}+x_{18},\, x_{22}+x_{18}+1,\, x_{23}+1,\, x_{24}+x_{18},\, x_{25}+x_{18}+1,\, x_{26}+1,\, x_{27},\, x_{28}+1,\, x_{29}$\\
        $\T_{11}$&4&$x_1,\, x_2+1,\, x_3,\, x_4+1,\, x_5,\, x_6,\, x_7,\, x_8,\, x_9+1,\, x_{10},\, x_{11}+1,\, x_{12},\, x_{13}+1,\, x_{16}+1,\, x_{18}+1,\, x_{20}+x_{17},\, x_{21}+1,\, x_{22}+x_{19},\, x_{23}+x_{17},\, x_{24}+1,\, x_{25}+x_{19},\, x_{26}+1,\, x_{27},\, x_{28}+1,\, x_{29}$\\
        $\T_{12}$&2&$x_1,\, x_2,\, x_3+1,\, x_4+1,\, x_5,\, x_6,\, x_7,\, x_8,\, x_9+1,\, x_{10},\, x_{11}+1,\, x_{12},\, x_{13}+1,\, x_{15}+1,\, x_{16}+1,\, x_{18}+x_{14}x_{17}+1,\, x_{19}+1,\, x_{20}+x_{17},\, x_{21}+x_{14}x_{17}+1,\, x_{22}+1,\, x_{23}+x_{17},\, x_{24}+x_{14}x_{17}+1,\, x_{25}+1,\, x_{26}+1,\, x_{27},\, x_{28}+1,\, x_{29}$\\
        $\T_{13}$&1&$x_1,\, x_2+1,\, x_3,\, x_4+1,\, x_5,\, x_6,\, x_7,\, x_8,\, x_9+1,\, x_{10},\, x_{11},\, x_{12}+1,\, x_{13}+1,\, x_{14}+1,\, x_{15},\, x_{16}+1,\, x_{17}+1,\, x_{18}+1,\, x_{20}+1,\, x_{21}+1,\, x_{22}+x_{19},\, x_{23}+1,\, x_{24}+1,\, x_{25}+x_{19},\, x_{26}+1,\, x_{27},\, x_{28}+1,\, x_{29}$\\
        $\T_{14}$&1&$x_1,\, x_2+1,\, x_3,\, x_4+1,\, x_5,\, x_6,\, x_7,\, x_8,\, x_9+1,\, x_{10},\, x_{11},\, x_{12}+1,\, x_{13}+1,\, x_{14}+1,\, x_{15}+1,\, x_{16}+1,\, x_{17}+1,\, x_{19}+x_{18}+1,\, x_{20}+1,\, x_{21}+x_{18},\, x_{22}+x_{18}+1,\, x_{23}+1,\, x_{24}+x_{18},\, x_{25}+x_{18}+1,\, x_{26}+1,\, x_{27},\, x_{28}+1,\, x_{29}$\\
        $\T_{15}$&2&$x_1,\, x_2+1,\, x_3,\, x_4+1,\, x_5,\, x_6,\, x_7,\, x_8+1,\, x_9,\, x_{10},\, x_{11},\, x_{12}+1,\, x_{13}+1,\, x_{15}+1,\, x_{16}+1,\, x_{18}+x_{14}x_{17}+1,\, x_{19}+1,\, x_{20}+x_{17},\, x_{21}+x_{14}x_{17}+1,\, x_{22}+1,\, x_{23}+x_{17},\, x_{24}+x_{14}x_{17}+1,\, x_{25}+1,\, x_{26}+1,\, x_{27},\, x_{28}+1,\, x_{29}$\\
        $\T_{16}$&4&$x_1,\, x_2,\, x_3+1,\, x_4+1,\, x_5,\, x_6,\, x_7,\, x_8+1,\, x_9,\, x_{10},\, x_{11},\, x_{12}+1,\, x_{13}+1,\, x_{16}+1,\, x_{18}+1,\, x_{20}+x_{17},\, x_{21}+1,\, x_{22}+x_{19},\, x_{23}+x_{17},\, x_{24}+1,\, x_{25}+x_{19},\, x_{26}+1,\, x_{27},\, x_{28}+1,\, x_{29}$\\
        $\T_{17}$&4&$x_1,\, x_2+1,\, x_3,\, x_4,\, x_5,\, x_6+1,\, x_7+1,\, x_8,\, x_9,\, x_{10},\, x_{11}+1,\, x_{12},\, x_{13}+1,\, x_{16}+1,\, x_{18}+1,\, x_{20}+x_{17},\, x_{21}+1,\, x_{22}+x_{19},\, x_{23}+x_{17},\, x_{24}+1,\, x_{25}+x_{19},\, x_{26}+1,\, x_{27},\, x_{28}+1,\, x_{29}$\\
        $\T_{18}$&4&$x_1,\, x_2,\, x_3+1,\, x_4,\, x_5+1,\, x_6,\, x_7+1,\, x_8,\, x_9,\, x_{10},\, x_{11},\, x_{12}+1,\, x_{13}+1,\, x_{16}+1,\, x_{18}+1,\, x_{20}+x_{17},\, x_{21}+1,\, x_{22}+x_{19},\, x_{23}+x_{17},\, x_{24}+1,\, x_{25}+x_{19},\, x_{26}+1,\, x_{27},\, x_{28}+1,\, x_{29}$\\
        $\T_{19}$&2&$x_1+1,\, x_2,\, x_3,\, x_4,\, x_5,\, x_6+1,\, x_7+1,\, x_8,\, x_9,\, x_{10},\, x_{11}+1,\, x_{12},\, x_{13}+1,\, x_{15}+1,\, x_{16}+1,\, x_{18}+x_{14}x_{17}+1,\, x_{19}+1,\, x_{20}+x_{17},\, x_{21}+x_{14}x_{17}+1,\, x_{22}+1,\, x_{23}+x_{17},\, x_{24}+x_{14}x_{17}+1,\, x_{25}+1,\, x_{26}+1,\, x_{27},\, x_{28}+1,\, x_{29}$\\
        $\T_{20}$&2&$x_1+1,\, x_2,\, x_3,\, x_4,\, x_5+1,\, x_6,\, x_7+1,\, x_8,\, x_9,\, x_{10},\, x_{11},\, x_{12}+1,\, x_{13}+1,\, x_{15}+1,\, x_{16}+1,\, x_{18}+x_{14}x_{17}+1,\, x_{19}+1,\, x_{20}+x_{17},\, x_{21}+x_{14}x_{17}+1,\, x_{22}+1,\, x_{23}+x_{17},\, x_{24}+x_{14}x_{17}+1,\, x_{25}+1,\, x_{26}+1,\, x_{27},\, x_{28}+1,\, x_{29}$\\
        $\T_{21}$&1&$x_1,\, x_2,\, x_3+1,\, x_4+1,\, x_5,\, x_6,\, x_7+1,\, x_8,\, x_9,\, x_{10},\, x_{11}+1,\, x_{12},\, x_{13}+1,\, x_{14}+1,\, x_{15}+1,\, x_{16}+1,\, x_{17}+1,\, x_{19}+1,\, x_{20}+1,\, x_{21}+x_{18},\, x_{22}+1,\, x_{23}+1,\, x_{24}+x_{18},\, x_{25}+1,\, x_{26}+1,\, x_{27},\, x_{28}+1,\, x_{29}$\\
        $\T_{22}$&1&$x_1,\, x_2+1,\, x_3,\, x_4+1,\, x_5,\, x_6,\, x_7+1,\, x_8,\, x_9,\, x_{10},\, x_{11},\, x_{12}+1,\, x_{13}+1,\, x_{14}+1,\, x_{15}+1,\, x_{16}+1,\, x_{17}+1,\, x_{19}+1,\, x_{20}+1,\, x_{21}+x_{18},\, x_{22}+1,\, x_{23}+1,\, x_{24}+x_{18},\, x_{25}+1,\, x_{26}+1,\, x_{27},\, x_{28}+1,\, x_{29}$\\
        $\T_{23}$&2&$x_1,\, x_2+1,\, x_3,\, x_4,\, x_5,\, x_6+1,\, x_7,\, x_8+1,\, x_9,\, x_{10}+1,\, x_{11},\, x_{12},\, x_{13}+1,\, x_{15}+1,\, x_{16}+1,\, x_{18}+x_{14}x_{17}+1,\, x_{19}+1,\, x_{20}+x_{17},\, x_{21}+x_{14}x_{17}+1,\, x_{22}+1,\, x_{23}+x_{17},\, x_{24}+x_{14}x_{17}+1,\, x_{25}+1,\, x_{26}+1,\, x_{27},\, x_{28}+1,\, x_{29}$\\
        $\T_{24}$&0&$x_1,\, x_2+1,\, x_3,\, x_4,\, x_5,\, x_6+1,\, x_7,\, x_8,\, x_9+1,\, x_{10}+1,\, x_{11},\, x_{12},\, x_{13}+1,\, x_{14}+1,\, x_{15}+1,\, x_{16}+1,\, x_{17}+1,\, x_{18},\, x_{19}+1,\, x_{20}+1,\, x_{21},\, x_{22}+1,\, x_{23}+1,\, x_{24},\, x_{25}+1,\, x_{26}+1,\, x_{27},\, x_{28}+1,\, x_{29}$\\
        $\T_{25}$&1&$x_1,\, x_2,\, x_3+1,\, x_4,\, x_5+1,\, x_6,\, x_7,\, x_8+1,\, x_9,\, x_{10}+1,\, x_{11},\, x_{12},\, x_{13}+1,\, x_{14}+1,\, x_{15}+1,\, x_{16}+1,\, x_{17}+1,\, x_{19}+1,\, x_{20}+1,\, x_{21}+x_{18},\, x_{22}+1,\, x_{23}+1,\, x_{24}+x_{18},\, x_{25}+1,\, x_{26}+1,\, x_{27},\, x_{28}+1,\, x_{29}$\\
        $\T_{26}$&2&$x_1,\, x_2,\, x_3+1,\, x_4,\, x_5+1,\, x_6,\, x_7,\, x_8,\, x_9+1,\, x_{10}+1,\, x_{11},\, x_{12},\, x_{13}+1,\, x_{15}+1,\, x_{16}+1,\, x_{18}+x_{14}x_{17}+1,\, x_{19}+1,\, x_{20}+x_{17},\, x_{21}+x_{14}x_{17}+1,\, x_{22}+1,\, x_{23}+x_{17},\, x_{24}+x_{14}x_{17}+1,\, x_{25}+1,\, x_{26}+1,\, x_{27},\, x_{28}+1,\, x_{29}$\\
        $\T_{27}$&0&$x_1,\, x_2+1,\, x_3,\, x_4,\, x_5,\, x_6+1,\, x_7,\, x_8,\, x_9+1,\, x_{10}+1,\, x_{11},\, x_{12},\, x_{13}+1,\, x_{14}+1,\, x_{15}+1,\, x_{16}+1,\, x_{17}+1,\, x_{18}+1,\, x_{19}+1,\, x_{20}+1,\, x_{21}+1,\, x_{22}+1,\, x_{23}+1,\, x_{24}+1,\, x_{25}+1,\, x_{26}+1,\, x_{27},\, x_{28}+1,\, x_{29}$\\
        $\T_{28}$&4&$x_1+1,\, x_2,\, x_3,\, x_4,\, x_5,\, x_6+1,\, x_7,\, x_8+1,\, x_9,\, x_{10}+1,\, x_{11},\, x_{12},\, x_{13}+1,\, x_{16}+1,\, x_{18}+1,\, x_{20}+x_{17},\, x_{21}+1,\, x_{22}+x_{19},\, x_{23}+x_{17},\, x_{24}+1,\, x_{25}+x_{19},\, x_{26}+1,\, x_{27},\, x_{28}+1,\, x_{29}$\\
        $\T_{29}$&4&$x_1+1,\, x_2,\, x_3,\, x_4,\, x_5+1,\, x_6,\, x_7,\, x_8,\, x_9+1,\, x_{10}+1,\, x_{11},\, x_{12},\, x_{13}+1,\, x_{16}+1,\, x_{18}+1,\, x_{20}+x_{17},\, x_{21}+1,\, x_{22}+x_{19},\, x_{23}+x_{17},\, x_{24}+1,\, x_{25}+x_{19},\, x_{26}+1,\, x_{27},\, x_{28}+1,\, x_{29}$\\
        $\T_{30}$&1&$x_1,\, x_2+1,\, x_3,\, x_4,\, x_5,\, x_6+1,\, x_7+1,\, x_8,\, x_9,\, x_{10}+1,\, x_{11},\, x_{12},\, x_{13}+1,\, x_{14}+1,\, x_{15},\, x_{16}+1,\, x_{17}+1,\, x_{18}+1,\, x_{20}+1,\, x_{21}+1,\, x_{22}+x_{19},\, x_{23}+1,\, x_{24}+1,\, x_{25}+x_{19},\, x_{26}+1,\, x_{27},\, x_{28}+1,\, x_{29}$\\
        $\T_{31}$&1&$x_1,\, x_2+1,\, x_3,\, x_4,\, x_5,\, x_6+1,\, x_7+1,\, x_8,\, x_9,\, x_{10}+1,\, x_{11},\, x_{12},\, x_{13}+1,\, x_{14}+1,\, x_{15}+1,\, x_{16}+1,\, x_{17}+1,\, x_{19}+x_{18}+1,\, x_{20}+1,\, x_{21}+x_{18},\, x_{22}+x_{18}+1,\, x_{23}+1,\, x_{24}+x_{18},\, x_{25}+x_{18}+1,\, x_{26}+1,\, x_{27},\, x_{28}+1,\, x_{29}$\\
        $\T_{32}$&1&$x_1,\, x_2,\, x_3+1,\, x_4,\, x_5+1,\, x_6,\, x_7+1,\, x_8,\, x_9,\, x_{10}+1,\, x_{11},\, x_{12},\, x_{13}+1,\, x_{14}+1,\, x_{15},\, x_{16}+1,\, x_{17}+1,\, x_{18}+1,\, x_{20}+1,\, x_{21}+1,\, x_{22}+x_{19},\, x_{23}+1,\, x_{24}+1,\, x_{25}+x_{19},\, x_{26}+1,\, x_{27},\, x_{28}+1,\, x_{29}$\\
        $\T_{33}$&1&$x_1,\, x_2,\, x_3+1,\, x_4,\, x_5+1,\, x_6,\, x_7+1,\, x_8,\, x_9,\, x_{10}+1,\, x_{11},\, x_{12},\, x_{13}+1,\, x_{14}+1,\, x_{15}+1,\, x_{16}+1,\, x_{17}+1,\, x_{19}+x_{18}+1,\, x_{20}+1,\, x_{21}+x_{18},\, x_{22}+x_{18}+1,\, x_{23}+1,\, x_{24}+x_{18},\, x_{25}+x_{18}+1,\, x_{26}+1,\, x_{27},\, x_{28}+1,\, x_{29}$\\
\end{longtable}
}

\bibliographystyle{IEEEtran}
\bibliography{reference}






\end{document}